\newcommand{\ket}[1]{| #1 \rangle}
\newcommand{\bra}[1]{\langle #1 |}
\newcommand{\proj}[1]{| #1 \rangle\langle #1 |}
\newcommand{\spinup}{A^{\dagger}}
\newcommand{\spindown}{A}
\newtheorem{theorem}{Theorem}
\newtheorem{corollary}{Corollary}
\newtheorem{defin}{Definition}
\newtheorem{requirement}{Requirement}
\newtheorem{lemma}{Lemma}
\numberwithin{equation}{section}
\begin{document}
\title{Causal Fermions in Discrete Spacetime}

\author{Terence C.\ Farrelly} \email{tcf24@cam.ac.uk} \affiliation{DAMTP, Centre for Mathematical Sciences, Wilberforce Road, Cambridge, CB3 0WA, United Kingdom}
\author{Anthony J.~Short}\email{tony.short@bristol.ac.uk}\affiliation{H. H. Wills Physics Laboratory, University of Bristol$\text{,}$ Tyndall Avenue, Bristol, BS8 1TL, United Kingdom}

\begin{abstract}
In this paper, we consider fermionic systems in discrete spacetime evolving with a strict notion of causality, meaning they evolve unitarily and with a bounded propagation speed.  First, we show that the evolution of these systems has a natural decomposition into a product of local unitaries, which also holds if we include bosons.  Next, we show that causal evolution of fermions in discrete spacetime can also be viewed as the causal evolution of a lattice of qubits, meaning these systems can be viewed as quantum cellular automata.  Following this, we discuss some examples of causal fermionic models in discrete spacetime that become interesting physical systems in the continuum limit: Dirac fermions in one and three spatial dimensions, Dirac fields and briefly the Thirring model.  Finally, we show that the dynamics of causal fermions in discrete spacetime can be efficiently simulated on a quantum computer.
\end{abstract} 

\maketitle
\section{Introduction}
\label{sec:Introduction}
The idea that there is a maximum speed of propagation of information has become one of the fundamental principles of physics.  In particular, it appears in relativistic quantum field theories, the most interesting example of which is the standard model, which provides a unified framework for describing the effects of the strong, weak and electromagnetic forces.  In this paper we are going to look at discrete spacetime quantum systems with the requirement of strict causality.  Quantum systems in discrete space but continuous time with local Hamiltonians do not generally satisfy this notion of causality.  (For an example illustrating why local Hamiltonians in discrete space typically lead to instantaneous propagation of information, see appendix \ref{sec:Continuous Time Models in Discrete Space}.)  There are a few reasons that causal quantum systems in discrete spacetime are interesting.

The first is simulation: simulations of physics usually start by discretizing continuous degrees of freedom.  Of particular interest to us is the simulation of relativistic quantum field theories by a quantum computer.  Simulation of quantum physics is likely to be the first practical application where a quantum computer could out-preform a classical one \cite{Lloyd96}.  The idea that quantum computers may be better equipped than classical computers to simulate quantum systems dates back at least as far as Feynman in \cite{Feynman82}.  The basic idea was that quantum systems evolving in a local way ought to be efficiently simulable on a quantum computer using local operations, something that appears impossible on a classical computer.  This was shown to be true for continuous time quantum systems with local Hamiltonians in \cite{Lloyd96} and, for fermions in particular, in \cite{AL97}.  These approaches rely upon breaking up the total Hamiltonian $H$ into a sum of $k$ body terms $H_{l}$, where $k$ does not grow with the system size, and using a Trotter expansion in terms of $e^{-iH_{l}t/n}$ to recover $e^{-iHt}$.  For causal unitaries, however, it is not obvious how to simulate the evolution efficiently.  We show how to do this for fermions by decomposing the total unitary evolution operator exactly into local unitaries on qubits, such that the evolution can be efficiently simulated on a quantum computer.

The second reason we may be interested in discrete spacetime quantum models that are strictly causal is that nature itself may be discrete, and it is plausible that even at the smallest length scales there is a strict form of causality.  The idea that the spacetime continuum breaks down in some way at very small length scales is often embraced with a view to constructing a theory reconciling gravity with quantum mechanics, such as causal sets \cite{BLMS87}.  What is really interesting is that there are causal discrete spacetime models that become interesting relativistic quantum field theories in the continuum limit.  We describe some of these in section \ref{sec:Constructing Models}.  It is encouraging that, although these discrete systems do not have Lorentz symmetry, it is recovered in the continuum limit.  Currently, the only such models are fermionic; a particularly interesting example is given in \cite{DdV87}, which becomes the Thirring model in two dimensional spacetime.  Ideally, one would like a general recipe giving a discrete spacetime model that converges to a given quantum field theory in the continuum limit if it exists.  Taking such limits, however, is a complicated process as couplings must be renormalized.

The final reason causal quantum models in discrete spacetime are interesting is that they provide discretized models of relativistic systems that are well defined.  Despite its remarkable successes in explaining physics at very high energy scales, quantum field theory in the continuum has yet to be put on a firm mathematical footing.  Here, because we work with a discrete lattice, we have a regulator, so the infinities that plague quantum field theory do not appear.

The models we will look at are quantum systems in discrete spacetime with a strict notion of causality.  They evolve over each timestep via a causal unitary operator, which essentially means that localized observables can only spread a finite amount over one timestep.  Our main focus will be fermionic systems: one of our results is the proof of a general principle (theorem \ref{th:1} in section \ref{sec:Local Decomposition}) that says that causal fermionic evolution can be decomposed exactly into a product of local commuting fermionic unitaries, which is analogous to a result of \cite{ANW11} for quantum cellular automata.  This extends their maxim that \emph{``unitarity plus causality implies localizability''} to systems of fermions.  We also extend this to systems of fermions and bosons that may be interacting.

Causal fermionic systems in discrete spacetime are fermionic analogues of {\it Quantum Cellular Automata} (QCA) \cite{ANW11,GNVW12,SW04,Watrous95}.  These are discrete spatial lattices that have finite quantum systems (with associated finite dimensional Hilbert spaces) at each spatial point.  QCA evolve in discrete time via a causal unitary operator \footnote{Usually, QCA are defined over infinite lattices, so the evolution is an isomorphism of the C*-algebra of quasi local operators \cite{SW04}.  See appendix \ref{sec:Infinite Systems} for more details.}.  A useful picture to have in mind is a spin lattice that evolves over discrete timesteps in a causal way.  QCA are interesting for many reasons.  They are universal for quantum computation \cite{Watrous95}, with the particularly nice property that they are implementable by applying local unitary gates \cite{ANW11,GNVW12}.  We show that the dynamics of causal fermionic systems in discrete spacetime can be viewed as subsectors of the dynamics of QCA (theorem \ref{th:2}).  This is analogous to a result in \cite{Ball05,VC05}, which maps local fermionic Hamiltonians to local spin Hamiltonians.  Their results for local Hamiltonians provide a way of thinking about fermions without the need for anticommuting operators on different sites.  Theorem \ref{th:2} extends this to systems of causal fermions in discrete spacetime.  It is interesting that in both cases anticommuting fermionic operators are not necessary to describe the dynamics.

The breakdown of this paper is as follows.  We start in section \ref{sec:Fermions and the Jordan-Wigner Transformation} by going through the properties of fermions and the Jordan-Wigner Transformation, which allows us to represent fermionic operators by operators on qubits (two dimensional quantum systems, which we can think of as spin $1/2$ particles).  In section \ref{sec:Causal Fermions}, we look at causal fermionic evolution in discrete spacetime and show that it always has a decomposition in terms of local fermionic unitaries.  We also point out that this extends to possibly interacting bosonic and fermionic modes.  In section \ref{sec:Representation by Qubits}, we show that we can view any causal fermionic evolution as a causal discrete time evolution of qubits, or equivalently as a subsector of the evolution of a quantum cellular automaton.  Next, in section \ref{sec:Constructing Models}, we give some discrete spacetime models that become interesting systems in the continuum limit: first, we reproduce the evolution given in \cite{DdV87,D'Ariano12a,D'Ariano12b} of discrete spacetime Dirac fermions in one dimensional space and its corresponding local implementation on qubits.  Second, we look at a similar discrete fermionic model that behave like Dirac fermions in three dimensional space in the continuum limit, originally given for a single particle in \cite{Bial94}.  In section \ref{sec:Fermionic Fields}, we look at the representation of the Dirac field in discrete spacetime.  This is followed by a discussion in section \ref{sec:Simulating Causal Fermions} of the implications of these results for simulation.
After concluding remarks, we extend these results to infinite lattices via C*-algebras in the appendix.

Note that we set $\hbar=c=1$ throughout.

\section{Fermions and the Jordan-Wigner Transformation}
\label{sec:Fermions and the Jordan-Wigner Transformation}
For simplicity, here we will restrict the set of spatial points to be a finite-sized $d$-dimensional lattice, so that positions are labelled by vectors of integers. To introduce translational symmetry, it will sometimes be helpful to treat this lattice as a torus (i.e. to introduce periodic boundary conditions). The results all have natural extensions to different geometries, but it will be useful to have this particular example in mind for when we discuss locality and causality.  We postpone a discussion of systems on infinite lattices to appendix \ref{sec:Infinite Systems}, as these involve additional complications that are not necessary to understand the main ideas.

Now suppose that we have fermionic modes at each point.  A natural example of this is a system where each site can be occupied by spin up or spin down electrons.  We denote the state with all modes unoccupied by $\ket{\Omega}$.  Then we define creation and annihilation operators $a^{\dagger}_{\vec{x}\mu}$ and $a_{\vec{x}\mu}$, where $\vec{x}$ labels the position and $\mu$ labels the extra degree of freedom at each site.  These satisfy the canonical anticommutation relations:
\begin{equation}
 \begin{split}
 \{a^{\dagger}_{\vec{x}\mu},a_{\vec{y}\nu}\} & =\delta_{\mu\nu}\delta_{\vec{x}\vec{y}}\\
 \{a_{\vec{x}\mu},a_{\vec{y}\nu}\} & =0,
\end{split}
\end{equation}
where $\delta_{\vec{x}\vec{y}}=1$ if $\vec{x}=\vec{y}$ and is zero otherwise.  We also have that $a_{\vec{x}\mu}\ket{\Omega}=0$, and the Hilbert space is spanned by all possible products of creation operators $a^{\dagger}_{\vec{x}\mu}$ acting on $\ket{\Omega}$.  For example, the state $a^{\dagger}_{\vec{x}\mu}a^{\dagger}_{\vec{y}\nu}\ket{\Omega}$ has a fermion at $\vec{x}$ and a fermion at $\vec{y}$, with internal degrees of freedom $\mu$ and $\nu$ respectively.

An extra requirement that we make of systems of fermions is that physical observables are not only self-adjoint but are linear combinations of products of even numbers of creation and annihilation operators.  In particular, in continuous time systems, all physical Hamiltonians have this property.  For example, the Hubbard Hamiltonian is
\begin{equation}
\label{eq:4}
\begin{split}
H=- & \alpha\sum_{<\vec{x}\vec{y}>} \sum_{\mu} (a^{\dagger}_{\vec{x}\mu}a^{\ }_{\vec{y}\mu}+a^{\dagger}_{\vec{y}\mu}a^{\ }_{\vec{x}\mu})+\\
 & U\sum_{\vec{x}}(a^{\dagger}_{\vec{x}\uparrow}a^{\ }_{\vec{x}\uparrow})(a^{\dagger}_{\vec{x}\downarrow}a^{\ }_{\vec{x}\downarrow}),
\end{split}
\end{equation}
where $\mu\in\{\uparrow,\downarrow\}$ labels the extra degree of freedom (spin in this case), $\langle \vec{x}\vec{y}\rangle$ denotes nearest neighbour pairs, and $\alpha,U\geq 0$ are real parameters. The first term describes electrons hopping, while the second term is an on site Coulomb repulsion.

Because fermion creation and annihilation operators anticommute regardless of the spatial separation between them, these operators are, in a sense, non local objects.  If we want to represent a fermionic system by a system of qubits, we will have to take this into account.  For example, suppose we want to represent a line of $N$ fermionic modes with no internal degree of freedom and positions labelled by $x\in\{0,1,...,N-1\}$ by $N$ qubits.  It is natural to take $\ket{00...0}$ to represent the state with no fermions present.  Now we can choose the representation of $a^{\dagger}_0$ on the qubits to be
\begin{equation} 
\spinup_0=\textstyle\frac{1}{2}(X_0-iY_0) = \ket{1}_0\bra{0},
\end{equation} where $X$, $Y$ and $Z$ are the standard Pauli operators \footnote{$X=\ket{0}\bra{1} + \ket{1}\bra{0}, Y=i(\ket{1}\bra{0} - \ket{0}\bra{1}), Z=\proj{0}- \proj{1}$.}.    The subscript $0$ implies that these operators act on all other qubits like the identity, meaning, for example, $X_0=X\otimes I\otimes I...\otimes I$. Because $a^{\dagger}_0$ and $a^{\dagger}_1$ anticommute, we cannot simply pick $a^{\dagger}_1$ to be represented by $\spinup_1$.  Instead, we can satisfy the anticommutation relations by taking
\begin{equation}
a^{\dagger}_x\equiv \spinup_x\prod_{y<x}Z_y. \label{eq:ordering} 
\end{equation}
This is known as the Jordan-Wigner Transformation \cite{JW28}.  It preserves the anticommutation relations because of the strings of $Z$s.  In fact, we are free to choose the ordering in the product above however we want.  To see this, we give a more general Jordan-Wigner Transformation, which is particularly useful for higher dimensional lattices.  Given $N$ fermionic modes, we associate a qubit to each mode.  Next, we assign a unique label to each site, $\pi(\vec{x})\in\{0,...,N-1\}$, and define
\begin{equation}
\label{eq:general ordering}
a^{\dagger}_{\vec{x}}\equiv \spinup_{\vec{x}}\hspace{-1em}\prod_{\pi(\vec{y})<\pi(\vec{x})}\hspace{-1em}Z_{\vec{y}},
\end{equation}
which also satisfies the anticommutation relations.  For the special case of a line of fermions with $\pi(x)=x$ we recover equation \ref{eq:ordering}.  It is sometimes useful to use different ordering schemes for different problems.

If there are multiple fermionic modes at a lattice site (for example, due to spin, or different types of particle), then the qubit representation will include a separate qubit at that lattice site for each mode. Then the ordering $\pi(\vec{x}, \mu)$, where $\mu$ distinguishes different modes at site $\vec{x}$, will assign a unique number to each mode. It is natural to choose the ordering such that $\pi(\vec{x}, \mu)$ for the set of modes at each site are consecutive.  This means that even products of fermionic creation and annihilation operators at the same site are local in the qubit representation.

With the ordering in (\ref{eq:ordering}), local fermionic operators on a line are represented by local operators on the corresponding line of qubits.  In higher spatial dimensions, however, it is not generally true that local fermionic operators correspond to local qubit operators.  In fact, even for a ring of fermions this is not generally true.  Luckily, there is a way of getting around this that involves introducing auxiliary fermions, given in \cite{Ball05,VC05} and reproduced in appendix \ref{sec:Representing Local Fermionic Hamiltonians by Local Qubit Hamiltonians}, which we will use to extend the results of the following sections to systems of fermions in arbitrary spatial dimensions.

\section{Causal Fermions}
\label{sec:Causal Fermions}
In the following sections, where we prove our main results, we look at discrete time systems.  In continuous time, it is natural to work with a Hamiltonian as it determines the evolution via the Schr\"{o}dinger equation.  In a discrete time picture, however, there is no Schr\"{o}dinger equation, so it is more natural to work directly with the unitary operator that acts on the state each timestep.

To define notions of locality and causality, it is helpful to define the \emph{neighbourhood} of a spatial point. Here we take this to mean the set of points that are at most 1 lattice step away in each spatial direction (taking into account the periodic boundary conditions if necessary) \footnote{This definition will be convenient for the examples we consider in section \ref{sec:Constructing Models}.}. The neighbourhood of a point is therefore a $d$-dimensional hypercube of side length 3, centered on that point. Note that this definition of the neighbourhood is not critical to the proofs: for example, if there were next nearest neighbour interactions, we could consider larger hypercubes.

We say that a fermionic operator is localized on a spatial region $R$ if it can be written in terms of creation and annihilation operators corresponding only to $R$.  Because of this, and the fact that they must be sums of even products of creation and annihilation operators, localized observables from non overlapping regions of space always commute.

Next, we define a causal fermionic unitary.  Note that we work in the Heisenberg picture.
\begin{defin}
 A fermionic unitary $U$ is \emph{causal} if, for any $\vec{x}$ and $\mu$, $U^{\dagger}a_{\vec{x}\mu}U$ is localized in the neighbourhood of $\vec{x}$.
\end{defin}
The definition of a causal unitary ensures that over one timestep information cannot propagate farther than one step in each spatial direction.  Note that we can construct any operator localized on a region from creation and annihilation operators corresponding to that region.  So, in particular, this definition ensures that local observables do not spread very far after one timestep.

In continuous time systems, the Hamiltonian of a system of fermions is a sum of even products of creation and annihilation operators.  In particular, this implies that $e^{-iHt}$ commutes with the annihilation operator $b$ if $H$ does not contain any terms with $b$ or $b^{\dagger}$.  We take it for granted that the discrete time dynamics of fermions we consider also have this property.  This means that, given a system of fermions evolving via $U$, we can add additional fermionic modes whose creation operators anticommute with the original fermion creation and annihilation operators while commuting with $U$.  (We could just assume that $U=e^{iA}$, where $A$ is a sum of even products of creation and annihilation operators, but this is less general and not useful for infinite systems.)

Finally, we have the following useful lemma, which is proved in appendix \ref{sec:Proofs of Lemmas 1 and 2}.

\begin{lemma}
\label{lem:1}
The inverse of a causal fermionic unitary is also a causal fermionic unitary.
\end{lemma}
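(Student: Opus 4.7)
The goal is to show that $U^{-1}=U^\dagger$ is again a causal fermionic unitary; the nontrivial content is causality, i.e.\ that $U a_{\vec{y}\nu} U^\dagger$ is localized in the neighbourhood $N(\vec{y})$ for every site $\vec{y}$ and every internal mode $\nu$. The first observation is that, by the standing assumption made just above the lemma, $U$ commutes with the fermion parity $(-1)^{\hat N}$, and hence $U a_{\vec{y}\nu} U^\dagger$ is again odd under parity, like $a_{\vec{y}\nu}$ itself. This incidentally also shows that $U^{-1}$ is itself a fermionic unitary, so the whole content of the lemma really is the causality statement.

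Next I would compute vanishing anticommutators with ``distant'' mode operators. Since the neighbourhood is the $d$-cube of side $3$ centred on the point, it is symmetric, so $\vec{x}\notin N(\vec{y})$ is equivalent to $\vec{y}\notin N(\vec{x})$. For any such $\vec{x}$, causality of $U$ makes $U^\dagger a_{\vec{x}\mu} U$ an odd operator localized on $N(\vec{x})$, built from creation and annihilation operators at sites other than $\vec{y}$. Every such generator anticommutes with $a_{\vec{y}\nu}$, and because $U^\dagger a_{\vec{x}\mu} U$ is odd-degree, any normal-ordered monomial in its expansion contains an odd number of these generators, so the whole operator anticommutes with $a_{\vec{y}\nu}$, giving $\{a_{\vec{y}\nu},\,U^\dagger a_{\vec{x}\mu}U\}=0$. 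Conjugating by $U$ yields
\begin{equation*}
\{U a_{\vec{y}\nu}U^\dagger,\,a_{\vec{x}\mu}\}=0,
\end{equation*}
and the identical argument with $a^\dagger_{\vec{x}\mu}$ gives the companion identity for every $\vec{x}\notin N(\vec{y})$ and every mode $\mu$.

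The last step is to invoke a graded-locality principle: an odd fermionic operator that anticommutes with both $a_{\vec{x}\mu}$ and $a^\dagger_{\vec{x}\mu}$ for every site $\vec{x}$ outside a region $R$ is localized on $R$. I would prove this either by expanding in a normal-ordered monomial basis and checking that any monomial involving a mode at some $\vec{x}\in R^c$ contributes a non-cancelling term to one of the two anticommutators, or (probably cleaner) by passing through the Jordan-Wigner representation of (\ref{eq:general ordering}) with an ordering $\pi$ that enumerates the sites of $R$ before those of $R^c$: after stripping the common $Z$-strings supported on $R$, the outside fermionic anticommutators become commutation of the qubit image of $Y$ with Pauli $X$ and $Y$ on every qubit of $R^c$, forcing $Y$ to act as the identity on those qubits. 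Applying this with $R=N(\vec{y})$ finishes the proof.

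The main obstacle is the graded-locality principle itself; Steps 1 and 2 are just careful bookkeeping with the canonical anticommutation relations. A secondary subtlety in Step 2 is tracking the definite odd parity of $U^\dagger a_{\vec{x}\mu}U$, since it is that oddness which upgrades the a priori graded commutation with distant mode operators into strict anticommutation.
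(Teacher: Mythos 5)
Your proposal is correct and follows essentially the same route as the paper's proof in appendix B: establish that $U^{\dagger}a_{\vec{x}\mu}U$ is odd (the paper's lemma \ref{lem:2}, which you obtain equivalently via parity conservation), deduce vanishing anticommutators with mode operators at distant sites, conjugate by $U$, and use the symmetry of the neighbourhood relation to conclude that $Ua_{\vec{x}\mu}U^{\dagger}$ is localized on the neighbourhood of $\vec{x}$. The only difference is that you explicitly flag and sketch a proof of the final ``graded-locality'' step (that vanishing anticommutators with all outside modes imply localization), which the paper simply asserts with ``it follows that''; your extra care there is warranted but does not change the argument.
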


\subsection{Local Decomposition}
\label{sec:Local Decomposition}
Here we will give a local unitary decomposition for causal fermionic unitaries analogous to that given for QCA in \cite{ANW11,GNVW12}. In short, we will show that the full unitary evolution can be decomposed into a product of unitaries, each of which is localised in a particular neighbourhood.

Constructing this local decomposition requires us to consider the joint evolution of the system of fermions together with an identical copy of that system.  Let us denote the creation operators for the original system of fermions by $a^{\dagger}_{\vec{x}\mu}$ and the creation operators for the corresponding modes of a copy system by $b^{\dagger}_{\vec{x}\mu}$.

It will be useful to define a fermionic swap operator.  A unitary $\mathbf{S}_{\vec{x}\mu}$ implementing the fermionic swap $a^{\dagger}_{\vec{x}\mu}\leftrightarrow b^{\dagger}_{\vec{x}\mu}$, meaning $\mathbf{S}_{\vec{x}\mu}a_{\vec{x}\mu}\mathbf{S}_{\vec{x}\mu}=b_{\vec{x}\mu}$ and $\mathbf{S}_{\vec{x}\mu}b_{\vec{x}\mu}\mathbf{S}_{\vec{x}\mu}=a_{\vec{x}\mu}$, is
\begin{equation}
\mathbf{S}_{\vec{x}\mu}= \exp[i\frac{\pi}{2}(b^{\dagger}_{\vec{x}\mu}-a^{\dagger}_{\vec{x}\mu})(b_{\vec{x}\mu}-a_{\vec{x}\mu})].
\end{equation}
See appendix \ref{app:swap} for more details about this operator.

The local decomposition of a causal evolution of fermions is derived in the following theorem.

\begin{theorem}
\label{th:1}
Given a finite system of fermions with creation operators $a^{\dagger}_{\vec{x}\mu}$, evolving via the causal unitary $U_A$, the evolution of two copies of this system via $U_AU_B^{\dagger}$, where $U_B$ is equivalent to $U_A$ but acting on the $b^{\dagger}_{\vec{x}\mu}$ fermions, can be decomposed into local fermionic unitaries:
\begin{equation}
 U_AU_B^{\dagger}=\prod_{\vec{x},\mu}(\mathbf{S}_{\vec{x}\mu})\prod_{\vec{y},\nu}[U_B\mathbf{S}_{\vec{y}\nu}U_B^{\dagger}],
\end{equation}
where $U_B\mathbf{S}_{\vec{y}\nu}U_B^{\dagger}$ are commuting local fermionic unitaries.
\end{theorem}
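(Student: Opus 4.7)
The plan is to reduce the theorem to three checks about the global fermionic swap $\mathbf{S} = \prod_{\vec{x}\mu} \mathbf{S}_{\vec{x}\mu}$: an algebraic identity, locality of each right-hand factor, and mutual commutation of those factors. Because each $\mathbf{S}_{\vec{x}\mu}$ is an even-parity unitary supported on the single mode pair $(a_{\vec{x}\mu}, b_{\vec{x}\mu})$, distinct swaps commute and the product is well defined; each is also Hermitian (as is easily checked from the explicit exponential), so $\mathbf{S}^{2} = I$.

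The central algebraic step is the identity $\mathbf{S} U_A \mathbf{S} = U_B$. By definition $U_B$ is the polynomial expression obtained from $U_A$ by the substitution $a^{\dagger}_{\vec{x}\mu} \mapsto b^{\dagger}_{\vec{x}\mu}$, $a_{\vec{x}\mu} \mapsto b_{\vec{x}\mu}$, and conjugation by $\mathbf{S}$ implements exactly this substitution on every fermionic operator, so the identity holds term by term in any monomial expansion of $U_A$. Multiplying on the right by $U_B^{\dagger}$ and distributing the conjugation through the commuting product of swaps then yields
\begin{equation}
U_A U_B^{\dagger} \;=\; \mathbf{S}\,(U_B \mathbf{S} U_B^{\dagger}) \;=\; \prod_{\vec{x}\mu} \mathbf{S}_{\vec{x}\mu}\;\prod_{\vec{y}\nu}\bigl(U_B \mathbf{S}_{\vec{y}\nu} U_B^{\dagger}\bigr),
\end{equation}
which is the claimed factorization.

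Locality and commutation of the right-hand factors fall out quickly. The operator $\mathbf{S}_{\vec{y}\nu}$ is the exponential of a polynomial in $a_{\vec{y}\nu}, a^{\dagger}_{\vec{y}\nu}, b_{\vec{y}\nu}, b^{\dagger}_{\vec{y}\nu}$; since $U_B$ is an even polynomial in the $b$-modes it commutes with every $a$-operator, so conjugation affects only the $b$-part, and by Lemma~\ref{lem:1} the inverse $U_B^{\dagger}$ is causal, placing $U_B b_{\vec{y}\nu} U_B^{\dagger}$ in the neighbourhood of $\vec{y}$. Hence $U_B \mathbf{S}_{\vec{y}\nu} U_B^{\dagger}$ is localized there. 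For mutual commutation, the original swaps at distinct mode pairs are even-parity operators on disjoint modes and thus commute; conjugation by a common unitary preserves commutators, so the conjugated swaps commute as well.

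The only step I expect to require care is the identity $\mathbf{S} U_A \mathbf{S} = U_B$: it relies on the stipulation (recorded just before Lemma~\ref{lem:1}) that $U_A$ can be expressed purely in terms of the original $a$-mode operators and is extended to the enlarged system so as to commute with all copy-mode operators, so that the global swap faithfully intertwines $U_A$ and $U_B$. Once that is in place, locality is just causality of $U_B^{\dagger}$ applied to one operator and commutation is immediate, so the remainder of the argument is essentially formal.
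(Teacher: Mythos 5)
Your proposal is correct and follows essentially the same route as the paper's own proof: the identity $\mathbf{S}U_B\mathbf{S}=U_A$ (equivalently $\mathbf{S}U_A\mathbf{S}=U_B$, since $\mathbf{S}^2=I$) gives the factorization, locality of $U_B\mathbf{S}_{\vec{y}\nu}U_B^{\dagger}$ follows from causality of $U_B^{\dagger}$ via Lemma~\ref{lem:1}, and commutation is inherited from $[\mathbf{S}_{\vec{x}\mu},\mathbf{S}_{\vec{y}\nu}]=0$ under conjugation. Your added care about why the global swap intertwines $U_A$ and $U_B$, and why $U_B$ acts trivially on the $a$-modes, makes explicit what the paper leaves implicit but does not change the argument.
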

\begin{proof}
First,
\begin{equation}
 \prod_{\vec{x},\mu}(\mathbf{S}_{\vec{x}\mu})\prod_{\vec{y},\nu}[U_B\mathbf{S}_{\vec{y}\nu}U_B^{\dagger}]=\mathbf{S}U_B\mathbf{S}U_B^{\dagger},
\end{equation}
where
\begin{equation}
\mathbf{S}=\prod_{\vec{x},\mu}(\mathbf{S}_{\vec{x}\mu})
\end{equation}
is the unitary that swaps all modes.  And, because $\mathbf{S}$ swaps all modes, $\mathbf{S}U_B\mathbf{S}=U_A$.  It follows that
\begin{equation}
 \prod_{\vec{x},\mu}(\mathbf{S}_{\vec{x}\mu})\prod_{\vec{y},\nu}[U_B\mathbf{S}_{\vec{y}\nu}U_B^{\dagger}]=U_AU_B^{\dagger}.
\end{equation}
Furthermore, $U_B\mathbf{S}_{\vec{x}\mu}U_B^{\dagger}$ is a local fermionic unitary because $\mathbf{S}_{\vec{x}\mu}$ is just
\begin{equation}
 \exp[i\frac{\pi}{2}(b_{\vec{x}\mu}^{\dagger}-a_{\vec{x}\mu}^{\dagger})(b_{\vec{x}\mu}-a_{\vec{x}\mu})],
\end{equation}
and so
\begin{equation}
 U_B\mathbf{S}_{\vec{x}\mu}U_B^{\dagger}=\exp[i\frac{\pi}{2}(b_{\vec{x}\mu}^{\prime\dagger}-a_{\vec{x}\mu}^{\dagger})(b^{\prime}_{\vec{x}\mu}-a_{\vec{x}\mu})],
\end{equation}
where $b^{\prime}_{\vec{x}\mu}=U_Bb_{\vec{x}\mu}U_B^{\dagger}$, which must be localized within the neighbourhood of $\vec{x}$ because $U_B^{\dagger}$ is causal. Hence $ U_B\mathbf{S}_{\vec{x}\mu}U_B^{\dagger}$ is also localised within the neighbourhood of $\vec{x}$.  (Naturally, we are thinking of each mode labelled by $\vec{x}$ as being at the same site as its copy, which is why the unitaries $U_B\mathbf{S}_{\vec{x}\mu}U_B^{\dagger}$ are local.) The fact that the different unitaries $ U_B\mathbf{S}_{\vec{x}\mu}U_B^{\dagger}$ commute follows from $[\mathbf{S}_{\vec{x}\mu}, \mathbf{S}_{\vec{y}\nu}]=0$. 
\end{proof}
This theorem tells us that the joint causal evolution of two copies of a system of fermions can be decomposed into a product of local unitaries.  Now note that, given any state $\ket{\psi}$ of the physical system of fermions and its copy, it follows that, for any measurement operator $\mathcal{M}_A$ on the physical fermions,
\begin{equation}
 \bra{\psi}U_BU_A^{\dagger}\mathcal{M}_AU_AU_B^{\dagger}\ket{\psi}=\bra{\psi}U_A^{\dagger}\mathcal{M}_AU_A\ket{\psi},
\end{equation}
so this joint evolution of both the physical and auxiliary fermions is just as good as the original evolution but with the advantage of being decomposable into local fermionic unitaries.  For example, $\ket{\psi}$ could be any state of the physical fermions with all auxiliary modes unoccupied, such as $\textstyle{\frac{1}{\sqrt{2}}(a^{\dagger}_{\vec{x}\mu}+a^{\dagger}_{\vec{y}\nu}})\ket{\Omega}$, where $\ket{\Omega}$ is the state annihilated by all $a_{\vec{x}\mu}$ and $b_{\vec{x}\mu}$.

In some cases there is a natural local unitary decomposition of the evolution without the need to include a copy of the system: see the example in section \ref{sec:Discrete Dirac Fermions in One Dimension}.  In general, however, this is not true.  A simple counterexample is the unitary that shifts everything one step to the right every timestep.

As an aside, note that a similar decomposition exists for systems of bosons and fermions, which may be interacting.  To make the notation simple, we suppose that there is only one fermionic and one bosonic mode at each point, though the result still holds with multiple modes at each site.  We denote by  $\mathbb{S}_{\vec{x}}$ the bosonic swap operator between the mode at $\vec{x}$ with creation operator $c^{\dagger}_{\vec{x}}$ and its copy with creation operator $d^{\dagger}_{\vec{x}}$.
\begin{equation}
\mathbb{S}_{\vec{x}}= \exp[i\frac{\pi}{2}(d^{\dagger}_{\vec{x}}-c^{\dagger}_{\vec{x}})(d_{\vec{x}}-c_{\vec{x}})].
\end{equation}

The local decomposition is encapsulated in the following corollary.

\begin{corollary}
\label{cor:1}
 Given a finite system of fermions and bosons evolving via the causal unitary $U_A$, the evolution of two copies of this system via $U_AU_B^{\dagger}$, where $U_B$ is equivalent to $U_A$ but acting on the copy system, can be decomposed into local unitaries:
\begin{equation}
\begin{split}
 & U_AU_B^{\dagger}\\
& =\prod_{\vec{w}}(\mathbf{S}_{\vec{w}})\prod_{\vec{x}}(\mathbb{S}_{\vec{x}})\prod_{\vec{y}}[U_B\mathbf{S}_{\vec{y}}U_B^{\dagger}]\prod_{\vec{z}}[U_B\mathbb{S}_{\vec{z}}U_B^{\dagger}],
\end{split}
\end{equation}
where $U_B\mathbf{S}_{\vec{x}}U_B^{\dagger}$ and $U_B\mathbb{S}_{\vec{x}}U_B^{\dagger}$ are commuting local unitaries.
\end{corollary}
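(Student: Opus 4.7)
The plan is to run the proof of Theorem \ref{th:1} essentially verbatim, grouping the fermionic and bosonic swaps into one combined total swap. First I would define $\mathbf{T}=\prod_{\vec{w}}\mathbf{S}_{\vec{w}}\prod_{\vec{x}}\mathbb{S}_{\vec{x}}$, which simultaneously exchanges every physical mode with its copy. Each fermionic swap $\mathbf{S}_{\vec{w}}$ is an even polynomial in the $a,b$ operators, so under the standing convention that bosonic modes commute with fermionic modes one has $[\mathbf{S}_{\vec{w}},\mathbb{S}_{\vec{x}}]=0$ for all $\vec{w},\vec{x}$, and the separate commutators $[\mathbf{S}_{\vec{w}},\mathbf{S}_{\vec{w}'}]=0=[\mathbb{S}_{\vec{x}},\mathbb{S}_{\vec{x}'}]$ follow directly from the definitions. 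Hence the ordering in $\mathbf{T}$ is immaterial, and the product on the right-hand side collapses to $\mathbf{T}\,U_B\,\mathbf{T}\,U_B^{\dagger}$.

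Next I would observe that, since $\mathbf{T}$ swaps every fermionic and bosonic mode between the two copies, conjugation of $U_B$ by $\mathbf{T}$ yields $U_A$, exactly as in Theorem \ref{th:1}. This gives $\mathbf{T}\,U_B\,\mathbf{T}\,U_B^{\dagger}=U_A U_B^{\dagger}$, establishing the decomposition. For the locality and commutation claims I would note that $\mathbf{S}_{\vec{y}}$ depends only on $a_{\vec{y}}, a_{\vec{y}}^{\dagger}, b_{\vec{y}}, b_{\vec{y}}^{\dagger}$ and $\mathbb{S}_{\vec{z}}$ only on $c_{\vec{z}}, c_{\vec{z}}^{\dagger}, d_{\vec{z}}, d_{\vec{z}}^{\dagger}$. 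Conjugation by $U_B$ leaves the physical-mode operators untouched and replaces $b_{\vec{y}}\mapsto U_B b_{\vec{y}} U_B^{\dagger}$ and $d_{\vec{z}}\mapsto U_B d_{\vec{z}} U_B^{\dagger}$. By the mixed-system analogue of Lemma \ref{lem:1}, $U_B^{\dagger}$ is causal, so these transformed copy-operators are supported in the neighbourhoods of $\vec{y}$ and $\vec{z}$, which is precisely what locality of $U_B\mathbf{S}_{\vec{y}}U_B^{\dagger}$ and $U_B\mathbb{S}_{\vec{z}}U_B^{\dagger}$ requires. Commutativity of the conjugated swaps then transfers from commutativity of the bare swaps under the common unitary conjugation by $U_B$.

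The main subtlety — more a bookkeeping matter than a genuine obstacle — is confirming that the notion of causality, together with the invertibility statement of Lemma \ref{lem:1}, extends cleanly to the joint bosonic/fermionic setting. This requires the same physical convention already invoked in section \ref{sec:Causal Fermions}, namely that the generators of the dynamics are built from even products of fermionic operators (allowing additional bosonic and fermionic modes to be adjoined so that they (anti)commute with $U_B$ in the appropriate way). Once that convention is in place for the combined system, the calculation above goes through with no further input, and no new ideas beyond those in Theorem \ref{th:1} are needed.
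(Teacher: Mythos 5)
Your proposal is correct and follows exactly the route the paper intends: the paper states Corollary \ref{cor:1} without a separate proof, presenting it as the proof of Theorem \ref{th:1} rerun with the combined fermionic-plus-bosonic swap, which is precisely what you do. Your extra care about the mixed-system analogue of Lemma \ref{lem:1} and the evenness convention is a reasonable filling-in of details the paper leaves implicit, not a departure from its argument.
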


\subsection{Representation by Qubits}
\label{sec:Representation by Qubits}
Now that we have a local decomposition of a causal fermionic unitary,
we can look at its representation in the qubit picture.  In this and
the following section, we will assume there is only one fermionic mode
per site to make the notation simpler.  (The extension to more than
one mode per site is straightforward.)  We assign qubits to the fermionic mode at $\vec{x}$ and its copy (created by $a^{\dagger}_{\vec{x}}$ and $b^{\dagger}_{\vec{x}}$ respectively).  For now we will work with a finite line of points, so that with the natural choice of ordering for the Jordan-Wigner Transform (equation \ref{eq:ordering}) the operators $\mathbf{S}_{\vec{x}}$ are local unitaries in the qubit representation.  Next, recall that $U_B\mathbf{S}_{\vec{x}}U_B^{\dagger}$ is just
\begin{equation}
\exp[i\frac{\pi}{2}(b^{\prime\dagger}_{\vec{x}}-a^{\dagger}_{\vec{x}})(b^{\prime}_{\vec{x}}-a_{\vec{x}})].
\end{equation}
Now note that $b^{\prime\dagger}_{\vec{x}}=U_Bb^{\dagger}_{\vec{x}}U_B^{\dagger}$ must be a linear combination of {\it odd} powers of creation and annihilation operators on the neighbourhood of $\vec{x}$.  (This is proved in lemma \ref{lem:2} in appendix \ref{sec:Proofs of Lemmas 1 and 2}.)  So $U_B\mathbf{S}_{\vec{x}}U_B^{\dagger}$ has the form $e^{-iH_{\vec{x}}}$, where $H_{\vec{x}}$ is a self adjoint operator localized on the neighbourhood of $\vec{x}$ containing only even products of creation and annihilation operators.  But with the natural ordering for the Jordan-Wigner Transformation in equation \ref{eq:ordering}, $H_{\vec{x}}$ is localised on the neighbourhood of $\vec{x}$ in the qubit representation also.  This means that in the qubit representation $U_B\mathbf{S}_{\vec{x}}U_B^{\dagger}$ is localised on the neighbourhood of $\vec{x}$.

Hence, we can view this causal evolution of fermions as a causal evolution of qubits.  In higher spatial dimensions or lines with periodic boundary conditions, we need to do more to ensure that our causal fermionic evolution can be represented by local unitaries acting on qubits.  We see how to do this in the next section.  (Note that the main points about simulation and models of quantum field theories can be understood without going through the details of this.)  This proves the following theorem.

\begin{theorem}
\label{th:2}
 Any causal fermionic evolution in discrete spacetime is equivalent to a subsector of the causal evolution of a system of qubits, which is a type of quantum cellular automaton.
\end{theorem}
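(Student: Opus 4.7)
The plan is to prove Theorem~\ref{th:2} by combining the local decomposition from Theorem~\ref{th:1} with a judicious Jordan-Wigner mapping. Theorem~\ref{th:1} already expresses the joint evolution $U_A U_B^\dagger$ of the physical fermions together with an auxiliary copy as a product of commuting local fermionic unitaries $\mathbf{S}_{\vec{x}\mu}$ and $U_B \mathbf{S}_{\vec{x}\mu} U_B^\dagger$. Since $\bra{\psi} U_B U_A^\dagger \mathcal{M}_A U_A U_B^\dagger \ket{\psi} = \bra{\psi} U_A^\dagger \mathcal{M}_A U_A \ket{\psi}$ for any state $\ket{\psi}$ with the copy modes unoccupied, the physical evolution is reproduced by a subsector of the joint evolution; the task is then to show that the joint evolution, once translated through Jordan-Wigner, is a local (i.e.\ causal) unitary on qubits.

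The next step is to check that every factor in the decomposition descends to a local unitary on the corresponding qubit lattice. I would first apply lemma~\ref{lem:2} to conclude that $b'^\dagger_{\vec{x}\mu} = U_B b^\dagger_{\vec{x}\mu} U_B^\dagger$ is a sum of \emph{odd} products of creation and annihilation operators supported on the neighbourhood of $\vec{x}$. Hence $U_B \mathbf{S}_{\vec{x}\mu} U_B^\dagger = e^{-iH_{\vec{x}\mu}}$, where $H_{\vec{x}\mu}$ is a sum of \emph{even} products of fermionic operators supported on that neighbourhood, and $\mathbf{S}_{\vec{x}\mu}$ is itself of this form. One then needs the Jordan-Wigner image of such an even-product operator to be supported only on the qubits in the neighbourhood, which happens precisely when the internal $Z$-strings from equation~(\ref{eq:general ordering}) cancel pairwise.

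For a finite line with the ordering in equation~(\ref{eq:ordering}), the neighbourhood is contiguous in $\pi$, so the $Z$-strings telescope and qubit locality is automatic; this gives the one-dimensional case directly and proves the theorem there. For higher-dimensional lattices and for rings with periodic boundaries, no ordering $\pi$ can make every neighbourhood contiguous, so some $Z$-strings inevitably stretch far outside the neighbourhood. The resolution, imported from \cite{Ball05,VC05} and reproduced in appendix~\ref{sec:Representing Local Fermionic Hamiltonians by Local Qubit Hamiltonians}, is to introduce auxiliary fermionic modes at each site together with constraints whose invariant subspace allows every local even fermionic operator to be replaced by a genuinely local qubit operator. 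Applied to $U_A$, this yields an enlarged causal fermionic unitary on an augmented lattice; running Theorem~\ref{th:1} on the augmented system and then Jordan-Wigner-transforming produces a causal qubit QCA. The original fermionic dynamics is then the subsector where both the auxiliary copy modes and the Ball--Verstraete--Cirac constraints are in their chosen states.

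The main obstacle will be the higher-dimensional case: one must verify that the auxiliary-fermion construction of \cite{Ball05,VC05}, originally phrased for local Hamiltonians, carries over to discrete-time causal unitaries, and that the subspace defining the physical sector is left invariant by the resulting qubit QCA. The cleanest route is to perform the auxiliary-fermion embedding \emph{before} invoking Theorem~\ref{th:1}, so that the local decomposition is already in terms of even operators in both physical and auxiliary fermions, making each factor Jordan-Wigner-local on the augmented lattice by construction; invariance of the physical subspace then follows because each factor, being even in every local fermionic algebra, commutes with the stabilizer constraints.
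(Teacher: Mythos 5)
Your first three paragraphs reproduce the paper's own route almost exactly: Theorem~\ref{th:1} gives the decomposition of $U_AU_B^{\dagger}$ into $\mathbf{S}_{\vec{x}\mu}$ and $U_B\mathbf{S}_{\vec{x}\mu}U_B^{\dagger}=e^{-iH_{\vec{x}}}$ with $H_{\vec{x}}$ even and localized (via lemma~\ref{lem:2}); the 1D line is handled by the contiguous Jordan--Wigner ordering; rings and higher dimensions are handled by the Majorana-pair substitution of \cite{Ball05,VC05}; and the physical dynamics sits in the subsector with empty copy modes and $+1$ eigenstates of the $M_{(\vec{x},\vec{y})}$. Your observation that the constrained subspace is preserved because the modified generators commute with the stabilizers is also correct and matches the appendix.

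The one genuine problem is your closing suggestion that ``the cleanest route is to perform the auxiliary-fermion embedding \emph{before} invoking Theorem~\ref{th:1},'' applying the Ball--Verstraete--Cirac substitution to $U_A$ itself to obtain an enlarged causal unitary. The substitution $a_{\vec{x}}b_{\vec{y}}\rightarrow a_{\vec{x}}M_{(\vec{x},\vec{y})}b_{\vec{y}}$ preserves self-adjointness term by term, which is why it can be applied safely to a Hamiltonian, but if you expand the unitary $U_A$ as a sum of products of creation and annihilation operators and modify the terms, the resulting operator agrees with $U_A$ only on the constrained subspace and there is no reason for it to be unitary on the full augmented Fock space; you would then have nothing to feed into Theorem~\ref{th:1}, whose hypothesis is a causal \emph{unitary}. (Merely adjoining auxiliary modes on which $U_A$ acts trivially does not help, since the factors $U_B\mathbf{S}_{\vec{x}\mu}U_B^{\dagger}$ and their $Z$-strings are unchanged.) This is precisely the obstruction the paper flags in section~\ref{sec:More than one spatial dimension}, and it is why the order of operations there is the reverse of yours: first decompose into local factors $e^{-iH_{\vec{x}}}$, so that each generator $H_{\vec{x}}$ is a local, even, self-adjoint object, and only then apply the Majorana substitution to each $H_{\vec{x}}$, where self-adjointness --- and hence unitarity of $e^{-i\tilde{H}_{\vec{x}}}$ --- is manifestly preserved. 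With that reordering restored, your argument coincides with the paper's proof.
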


\subsection{More than one spatial dimension}
\label{sec:More than one spatial dimension}
Now that we are considering higher spatial dimensions, in general there is no choice of ordering for the Jordan-Wigner Transformation such that all local unitaries in the fermion picture are local in the qubit picture.  Naturally, we choose the ordering so that fermionic modes and their copies, which are associated to the same site, are consecutive in the ordering scheme.  This means that each fermionic swap $\mathbf{S}_{\vec{x}}$ term in the decomposition is still local in the qubit representation.

The $U_B\mathbf{S}_{\vec{x}}U_B^{\dagger}$ terms in the decomposition are more problematic.  They are not necessarily localized unitaries in the qubit picture.  Fortunately, we can circumvent this problem but at the price of adding auxiliary fermions, which also means that we need more qubits to represent this larger system of fermions.

First, we already know that $U_B\mathbf{S}_{\vec{x}}U_B^{\dagger}$ is a local fermionic unitary.  And, as we saw in the previous section, it has the form $e^{-iH_{\vec{x}}}$, where $H_{\vec{x}}$ is a self adjoint operator localized on the neighbourhood of $\vec{x}$ containing only even products of creation and annihilation operators.  Next, we use an idea from \cite{Ball05,VC05}.  (For brevity, here we will just give a rough idea of how this works; see section \ref{sec:Representing Local Fermionic Hamiltonians by Local Qubit Hamiltonians} for full details.)  Suppose, for example, that $H_{\vec{x}}$ contains the term $a_{\vec{x}}b_{\vec{y}}$, where $\vec{y}$ is in the neighbourhood of $\vec{x}$.  Suppose also that this term is non local in the qubit representation: this will be because of strings of $Z$ operators arising from the Jordan-Wigner Transformation.  Now we introduce two new fermionic modes, one at $\vec{x}$ and one at $\vec{y}$, with annihilation operators $c_{\vec{x}}$ and $c_{\vec{y}}$.  Because we choose the Jordan-Wigner Transformation ordering such that modes at the same site are consecutive, $a_{\vec{x}}c_{\vec{x}}$ and $c_{\vec{y}}b_{\vec{y}}$ are local in the qubit picture.  This is because all $Z$ operators in the qubit representation of $a_{\vec{x}}$ corresponding to sites other than $\vec{x}$ are cancelled by those from the qubit representation of $c_{\vec{x}}$.

Next, we make the replacement
\begin{equation}
\label{eq:6}
 a_{\vec{x}}b_{\vec{y}}\rightarrow a_{\vec{x}}(im_{\vec{x}}m_{\vec{y}})b_{\vec{y}},
\end{equation}
where
\begin{equation}
\begin{split}
 m_{\vec{x}} & =c_{\vec{x}}+c^{\dagger}_{\vec{x}}\\
 m_{\vec{y}} & =c_{\vec{y}}+c^{\dagger}_{\vec{y}},
\end{split}
\end{equation}
which we can think of as Majorana fermions.  The new term on the right hand side of \ref{eq:6} is local in the qubit picture, and, acting on a $+1$ eigenstate of $im_{\vec{x}}m_{\vec{y}}$,
\begin{equation}
 a_{\vec{x}}(im_{\vec{x}}m_{\vec{y}})b_{\vec{y}} = a_{\vec{x}}b_{\vec{y}}.
\end{equation}
Similarly, for any other terms in $H_{\vec{x}}$ that are non local in the qubit representation, we can add additional fermions, so that, acting on $+1$ eigenstates of all the additional terms like $im_{\vec{x}}m_{\vec{y}}$ that we add, the resulting operator is equal to $H_{\vec{x}}$.

This means that $U_B\mathbf{S}_{\vec{x}}U_B^{\dagger}$ is equivalent to a local fermionic unitary on a larger system that is local in the qubit picture.  So we have extended theorem \ref{th:2} to any dimension.

Note that the fermionic unitaries $U_B\mathbf{S}_{\vec{x}}U_B^{\dagger}$ commute.  After introducing additional fermions, the new unitaries $V_{\vec{x}}$ implementing $U_B\mathbf{S}_{\vec{x}}U_B^{\dagger}$ on the qubits do not necessarily commute when the neighbourhoods on which they are localized overlap.  Notice, however, that the order in which they are applied does not matter when acting on a $+1$ eigenstate of the Majorana pairs $im_{\vec{x}}m_{\vec{y}}$.  Furthermore, we can apply many $V_{\vec{x}}$ simultaneously, provided they act on non overlapping regions.  For example, for a line, we can apply all $V_x$ with $x\bmod 3=0$ first, followed by all $V_x$ with $x\bmod 3=1$, followed by all $V_x$ with $x\bmod 3=2$.  So for a line we need only three steps to implement every $V_x$.

On one hand, it seems that we could have just expanded the causal fermionic unitary $U$ as a sum of products of creation and annihilation operators and applied the trick of adding Majorana fermions directly to this.  It is not clear, however, that we can do this and preserve unitarity.  Applying the prescription to a Hamiltonian worked because we could always preserve self-adjointness.  Instead, our approach was to derive an exact decomposition of causal fermionic unitaries into a product of local unitaries, each of which can be written in the form $e^{-iH_{\vec{x}}}$, where each $H_{\vec{x}}$ has support only on the neighbourhood of $\vec{x}$.  We then applied the trick of adding Majorana fermions to each $H_{\vec{x}}$.

It is important from the point of view of simulation and for the extension to infinite lattices that the number of additional fermions we need to add per site does not depend on $\mathcal{N}$, the number of original fermionic modes.

\section{Constructing Models}
\label{sec:Constructing Models}
In the following sections, we take a constructive approach and examine specific examples of causal discrete spacetime models that become interesting continuum models as we take a continuum limit.  We also represent the evolution of such discrete spacetime systems by products of local unitaries on qubits.  Note that the realization of this for the one dimensional Dirac equation appeared in \cite{D'Ariano12a,D'Ariano12b,BDT12}.

\subsection{Discrete Dirac Fermions in One Dimension}
\label{sec:Discrete Dirac Fermions in One Dimension}
In this section, we will look at fermions in discrete spacetime that obey the one dimensional Dirac equation in the continuum limit.  We will use theorem \ref{th:1} to decompose the evolution into a simple product of local fermionic unitaries, and then we will give a representation of this in terms of qubits.  Both of these steps allow us to reproduce the evolution given in \cite{D'Ariano12a,D'Ariano12b,BDT12}.  (A single particle in discrete spacetime that obeys the Dirac equation in the continuum limit appeared in \cite{FH65}.  The details of this continuum limit were studied further in \cite{Strauch06,Strauch07,BES07}.)

For simplicity, suppose that our discrete space is finite, with sites labelled by $n\in\{0,...,N-1\}$ with periodic boundary conditions.  And suppose there are two fermionic modes at each site, labelled by $l$ and $r$.  As we anticipate getting Dirac fermions in the continuum limit, let us denote creation operators by $\psi^{\dagger}_{n,a}$, where $n\in\{0,...,N-1\}$ and $a\in\{l,r\}$.  Also define
\begin{equation}
 \psi_{n}=\begin{pmatrix}
           \psi_{n,r}\\
           \psi_{n,l}
          \end{pmatrix}.
\end{equation}
We define the matrices $\beta$ and $\alpha_1$ in this basis to be
\begin{equation}
 \beta=\begin{pmatrix}
            0 & 1\\
            1 & 0\\
           \end{pmatrix},\ \ \alpha_1=\begin{pmatrix}
            1 & 0\\
            0 & -1\\
           \end{pmatrix}.
\end{equation}

It will be convenient to work in momentum space: $\mathbf{p}=2\pi k/N$ is the discrete momentum, where $k\in\{-\textstyle\frac{N-1}{2},...,\textstyle\frac{N-1}{2}\}$, and we take $N$ to be odd.  The momentum creation operators are
\begin{equation}
\psi^{\dagger}_{\mathbf{p},a} =\frac{1}{\sqrt{N}}\displaystyle\sum_{n}e^{i\mathbf{p}n}\psi^{\dagger}_{n,a}.
\end{equation}

Let us also suppose that over each timestep the system evolves via the unitary $U=WT$.  In the continuum limit, the unitary $W$ will contribute the mass term in the Hamiltonian, and the unitary $T$ will contribute the momentum term.  First, $T$ is a conditional shift that has the effect
\begin{equation}
\label{eq:3}
T\psi_{n}T^{\dagger}=\begin{pmatrix}
           T\psi_{n,r}T^{\dagger}\\
           T\psi_{n,l}T^{\dagger}
          \end{pmatrix}=
\begin{pmatrix}
\psi_{n+1,r}\\
\psi_{n-1,l}
\end{pmatrix}.
\end{equation}
We can write $T$ in terms of the discrete momentum operator.  The operator that translates $\psi_{n,a}$ one step to the right is
\begin{equation}
 \exp(-iP_{a}),
\end{equation}
where
\begin{equation}
P_{a} =\displaystyle\sum_{\mathbf{p}}\mathbf{p}\,\psi^{\dagger}_{\mathbf{p},a}\psi_{\mathbf{p},a}.
\end{equation}
So we have
\begin{equation}
T=\exp(-i[P_r-P_l])=\exp(-i\displaystyle\sum_{\mathbf{p}}\mathbf{p}\psi^{\dagger}_{\mathbf{p}}\alpha_1\psi_{\mathbf{p}}).
\end{equation}

We define $W$ to be
\begin{equation}
W=\exp(-iM\displaystyle\sum_{\mathbf{p}}\psi^{\dagger}_\mathbf{p}\beta\psi_\mathbf{p}).
\end{equation}

Now, to take a continuum limit of this, we embed the $N$ spatial points into a line of length $L$ (with periodic boundary conditions), with lattice spacing $\varepsilon=L/N$.  We must also let the number of timesteps $\tau$ grow as $\varepsilon\rightarrow 0$, so let $t=\tau \varepsilon$, with $t$ constant.  To get a sensible continuum limit, we set $M=m\varepsilon$, where $m$ is a constant.  Defining $p=\mathbf{p}/\varepsilon$ and $\psi_{p}=\psi_{\mathbf{p}}$, and using Trotter's formula \cite{Trot59}, we get
\begin{equation}
\label{eq:Trotter}
\lim_{\varepsilon \to 0}U^{t/\varepsilon}=\exp[-i\displaystyle\sum_{p}\psi^{\dagger}_{p}(p\alpha_1+ m\beta)\psi_{p}t],
\end{equation}
where the sum is now over $p=2\pi k/L$, with $k\in\mathbb{Z}$.  So the continuum limit corresponds to particles evolving via the Dirac Hamiltonian in one spatial dimension:
\begin{equation}
\label{eq:HD}
 H_D=\displaystyle\sum_{p}\psi^{\dagger}_{p}(p\alpha_1+ m\beta)\psi_{p}.
\end{equation}
Furthermore, we could take $L$ to infinity to recover Dirac fermions on an infinite line.

Now that we have found the continuum limit, let us return to the discrete time evolution.  The conditional shift part of the evolution (the unitary $T$ in equation \ref{eq:3}) shifts $\psi_{n,l}$ to the left and $\psi_{n,r}$ to the right.  But we can think of this as one system of $\psi_{n,l}$ fermions evolving via a shift to the left and a copy of that system, the $\psi_{n,r}$ fermions, evolving via the inverse unitary: a shift to the right.  This allows us to apply theorem \ref{th:1} to see that this is equivalent to the fermionic swaps
\begin{equation}
\label{eq:1}
  \psi_{n,l}  \leftrightarrow \psi_{n-1,r},
\end{equation}
at each $n$, followed by
\begin{equation}
\label{eq:2}
   \psi_{n,r}  \leftrightarrow \psi_{n,l},
\end{equation}
at each $n$.  Applying the local unitaries in (\ref{eq:1}) followed by the local unitaries in (\ref{eq:2}) reproduces the original conditional shift in (\ref{eq:3}).

Note also that the part of the evolution that models mass is also a product of local unitaries, since the unitary $W$ in position space is
\begin{equation}
\label{eq:90}
\exp(-iM\displaystyle\sum_{n}\psi^{\dagger}_n\beta\psi_n)=\prod_{n}\exp(-iM\psi^{\dagger}_n\beta\psi_n).
\end{equation}
So the evolution operator $U$ is a product of local unitaries.

In the next section, we represent this discrete fermionic system on qubits.

\subsection{Representation by Qubits}
\label{sec:Representation by Qubits 3}
We associate a qubit to each mode, such that the basis states $\ket{1}_{nl}$ and $\ket{0}_{nl}$ correspond to the presence and absence of a left handed fermion at the point $n$, and similarly the states $\ket{1}_{nr}$ and $\ket{0}_{nr}$ correspond to the presence and absence of a right handed fermion at the point $n$.

We then represent the fermion creation operators using the Jordan-Wigner Transformation, with the ordering $\pi(n,l)=2n$ and $\pi(n,r)=2n+1$, so that
\begin{equation}
 \psi^{\dagger}_{n,\mu} \equiv \spinup_{n\mu}\hspace{-1.2em} \prod_{\pi(k,\nu)<\pi(n,\mu)} \hspace{-1.2em} Z_{k\nu}.
\end{equation}
With this choice of ordering, each local fermionic unitary we found in the last section is local in the qubit picture with the exception of the swap  $\psi^{\dagger}_{0,l} \leftrightarrow \psi^{\dagger}_{N-1,r}$ across the periodic boundary.  For example, the swapping operator in (\ref{eq:1}) for $n \neq0$  becomes
\begin{equation}
\exp[i\frac{\pi}{2}(\spinup_{(n-1)r}-\spinup_{nl})(\spindown_{(n-1)r}-\spindown_{nl})],
\end{equation}
which is a local unitary on the qubits.  The non-local swap crossing the periodic boundary can be dealt with be adding an extra pair of fermionic modes at positions $0$ and $N-1$ and using the trick described in section \ref{sec:More than one spatial dimension}.  Finally, as $W$ is a product of on-site unitaries (equation \ref{eq:90}), these will be local in the qubit picture.  This form of the evolution of discrete Dirac fermions on a (infinite) line was given in \cite{D'Ariano12a,D'Ariano12b,BDT12}.

\subsection{Discrete Dirac Fermions in Three Dimensions}
\label{sec:Discrete Dirac Fermions in Three Dimensions}
In this section, we construct a causal discrete spacetime model that, in the continuum limit, becomes a system of fermions obeying the Dirac equation in three spatial dimensions.  To do this, we first construct discrete fermions that obey the Weyl equation in the continuum limit, and then extend this to fermions obeying the Dirac equation in the continuum limit.  Note that this evolution for a single particle was given in \cite{Bial94}.

Suppose that our discrete space is finite with periodic boundary conditions and sites labelled by three component vectors $\vec{n}$ with components in $\{0,...,N-1\}$.  Suppose also that there are two fermionic modes at each site, labelled by $a=1,2$.  Let us define the corresponding fermion creation operators by $\psi^{\dagger}_{\vec{n},a}$.  Also define
\begin{equation}
 \psi_{\vec{n}}=\begin{pmatrix}
           \psi_{\vec{n},1}\\
           \psi_{\vec{n},2}
          \end{pmatrix}.
\end{equation}
We take $\sigma_i$ to be the Pauli matrices:
\begin{equation}
 \sigma_1=\begin{pmatrix}
            0 & 1\\
            1 & 0\\
           \end{pmatrix},\ \ \sigma_2=\begin{pmatrix}
            0 & -i\\
            i & 0\\
           \end{pmatrix}\ \ \sigma_3=\begin{pmatrix}
            1 & 0\\
            0 & -1\\
           \end{pmatrix}.
\end{equation}

Again, it is simpler to work in momentum space: $\vec{\mathbf{p}}=2\pi \vec{k}/N$ is the discrete momentum, where the components of $\vec{k}$ take values in $\{-\textstyle\frac{N-1}{2},...,\textstyle\frac{N-1}{2}\}$, and we take $N$ to be odd.  The momentum creation operators are
\begin{equation}
\psi^{\dagger}_{\vec{\mathbf{p}},a} =\frac{1}{\sqrt{N^3}}\displaystyle\sum_{\vec{n}}e^{i\vec{\mathbf{p}}.\vec{n}}\psi^{\dagger}_{\vec{n},a},
\end{equation}
with
\begin{equation}
 \psi_{\vec{\mathbf{p}}}=\begin{pmatrix}
           \psi_{\vec{\mathbf{p}},1}\\
           \psi_{\vec{\mathbf{p}},2}
          \end{pmatrix}.
\end{equation}

Suppose that over each timestep the system evolves via the unitary $U=T_1T_2T_3$, where $T_i$ are conditional shifts in each spatial direction:
\begin{equation}
T_i=\exp(-i\displaystyle\sum_{\mathbf{p}}\mathbf{p}_i\psi_{\vec{\mathbf{p}}}^{\dagger}\sigma_i\psi_{\vec{\mathbf{p}}}).
\end{equation}
Note that each $T_i$ is causal.  Analogously to equation \ref{eq:Trotter} in section \ref{sec:Discrete Dirac Fermions in One Dimension}, by applying the Trotter formula, these fermions obey the Weyl equation in the continuum limit.  In other words, in the continuum limit they evolve via the Hamiltonian
\begin{equation}
 H_W=\displaystyle\sum_{\vec{p}}\psi_{\vec{p}}^{\dagger}\vec{p}.\vec{\sigma}\psi_{\vec{p}},
\end{equation}
with the sum ranging over all $\vec{p}=2\pi \vec{k}/L$, where $\vec{k}$ has integer components.

As in the one dimensional case (section \ref{sec:Discrete Dirac Fermions in One Dimension}), we can view the discrete evolution operator $U=T_1T_2T_3$ as a product of local unitaries.  This is because each $T_i$ is a conditional shift operator in the $i$th spatial direction that can be rewritten as a product of local swap operations.  We elaborate on this in appendix \ref{sec:Local Unitaries for Discrete Weyl Fermions in Three Dimensions}.

Now, to get fermions obeying the Dirac equation in the continuum limit, we need {\it four} fermionic modes at each site.  Call the creation operators for these modes $\psi^{\dagger}_{\vec{n},r,a}$ and $\psi^{\dagger}_{\vec{n},l,a}$, with $a=1,2$.  In the continuum limit $r$ and $l$ will correspond to right handed and left handed modes respectively.  We define
\begin{equation}
 \psi_{\vec{\mathbf{p}}}=\begin{pmatrix}
         \psi_{\vec{\mathbf{p}},r}\\
         \psi_{\vec{\mathbf{p}},l}
        \end{pmatrix},
\end{equation}
where each of the components of this vector has two components:
\begin{equation}
 \psi_{\vec{\mathbf{p}},l}=\begin{pmatrix}
         \psi_{\vec{\mathbf{p}},l,1}\\
         \psi_{\vec{\mathbf{p}},l,2}
        \end{pmatrix}\ \textrm{and}\ \psi_{\vec{\mathbf{p}},r}=\begin{pmatrix}
         \psi_{\vec{\mathbf{p}},r,1}\\
         \psi_{\vec{\mathbf{p}},r,2}
        \end{pmatrix}.
\end{equation}

Let the evolution operator be $U=WT$, where in the continuum limit $W$ will contribute the mass term, and $T$ will contribute the momentum term in the Hamiltonian.  Let $T=T_{1}T_{2}T_{3}$, where $T_i$ are conditional shifts acting differently on the $l$ and $r$ fermions:
\begin{equation}
\begin{split}
 T_{i}=&\exp(+i\displaystyle\sum_{\mathbf{p}}\mathbf{p}_i\psi_{\vec{\mathbf{p}},l}^{\dagger}\sigma_i\psi_{\vec{\mathbf{p}},l})\\
\times &\exp(-i\displaystyle\sum_{\mathbf{p}}\mathbf{p}_i\psi_{\vec{\mathbf{p}},r}^{\dagger}\sigma_i\psi_{\vec{\mathbf{p}},r}).
\end{split}
\end{equation}
Note that both terms commute.  Define
\begin{equation}
\beta=\begin{pmatrix}
           0 & I\\
           I & 0
          \end{pmatrix},
\end{equation}
where $I$ is the $2\times 2$ identity matrix.  Also define 
\begin{equation}
\alpha_i=\begin{pmatrix}
           \sigma_i & 0\\
           0 & -\sigma_i
          \end{pmatrix}.
\end{equation}
Then $T_i$ can be rewritten as
\begin{equation}
 T_i=\exp(-i\displaystyle\sum_{\mathbf{p}}\mathbf{p}_i\psi^{\dagger}_{\vec{\mathbf{p}}}\alpha_i\psi_{\vec{\mathbf{p}}}).
\end{equation}

And, similarly to the one dimensional case, we have the mass term
\begin{equation}
 \exp(-iM\displaystyle\sum_{\vec{\mathbf{p}}}\psi^{\dagger}_{\vec{\mathbf{p}}}\beta\psi_{\vec{\mathbf{p}}}).
\end{equation}

As in the one dimensional case, in the continuum limit we get fermions evolving via the three dimensional Dirac Hamiltonian:
\begin{equation}
\label{eq:3DDirac}
 \displaystyle\sum_{\vec{p}}\psi^{\dagger}_{\vec{p}}(\vec{p}.\vec{\alpha}+ m\beta)\psi_{\vec{p}},
\end{equation}
with the sum ranging over all $\vec{p}=2\pi \vec{k}/L$, where $\vec{k}$ has integer components.

The unitary determining the evolution of these discrete Dirac fermions, $U=WT_1T_2T_3$, is equivalent to a product of local unitaries.  This is because $W$ is a product of on-site unitaries mixing between $\psi_{\vec{n},r}$ and $\psi_{\vec{n},l}$ and each $T_i$ can be decomposed separately into local unitaries for both $r$ and $l$ modes (as we saw for the Weyl case above).  To view this as a causal evolution of qubits, however, additional fermionic modes (and hence additional qubits) would have to be introduced to simulate the effects of fermionic anticommutation.  Nevertheless, we saw how to do this in section \ref{sec:More than one spatial dimension}.

\subsection{Fermionic Fields}
\label{sec:Fermionic Fields}
Now let us turn to Quantum Field Theory (QFT), where in the usual approach fields are fundamental, and particles emerge after quantization.  Take the Dirac field, with field operators $\psi_{\alpha}(\vec{x})$, where in three dimensional space $\alpha\in\{1,2,3,4\}$, but in one or two dimensional space $\alpha\in\{1,2\}$.  The field operators obey the Schr\"{o}dinger picture anticommutation relations \cite{Peskin95}
\begin{equation}
\label{eq:13}
\begin{split}
 \{\psi_{\alpha}(\vec{x}),\psi_{\beta}(\vec{y})\} & =0\\
 \{\psi_{\alpha}(\vec{x}),\psi^{\dagger}_{\beta}(\vec{y})\} & =\delta_{\alpha\beta}\delta^{(3)}(\vec{x}-\vec{y}).
\end{split}
\end{equation}

Ideally, we would like to represent individual electrons and positrons at site $\vec{x}$ by fermionic modes at $\vec{x}$, but it is not clear how to do this.  To see why, note that the continuum field operator in three dimensional space is
\begin{equation}
\label{eq:15}
\begin{split}
 & \psi_\alpha (\vec{x})=\\ & \int\!\frac{d^3p}{(2\pi)^3}\frac{1}{\sqrt{2E_p}}\sum_{s=1,2}(a^{s}_{\vec{p}}u^{s}_\alpha (\vec{p})e^{i\vec{p}.\vec{x}}+b^{s\dagger}_{\vec{p}}v^{s}_\alpha (\vec{p})e^{-i\vec{p}.\vec{x}}),
\end{split}
\end{equation}
where $E_p=+\sqrt{|\vec{p}|^2+m^2}$, $a^{s\dagger}_{\vec{p}}$ and $b^{s\dagger}_{\vec{p}}$ create electrons and positrons with momentum $\vec{p}$ and spin state labelled by $s\in\{1,2\}$, and $u_{\alpha}^{s}(\vec{p})$ and $v_{\alpha}^{s}(\vec{p})$ are four component eigenvectors of the Dirac Hamiltonian, satisfying
\begin{equation}
\begin{split}
 & \sum_{\alpha} u^{s\dagger}_\alpha (\vec{p})u^{r}_\alpha (\vec{p})=2E_p\delta_{sr}\\
 &\sum_{\alpha} v_\alpha^{s\dagger}(\vec{p})v_\alpha^{r}(\vec{p})=2E_p\delta_{sr}\\
 &\sum_{\alpha} v^{s\dagger}_\alpha(-\vec{p})u^{r}_\alpha(\vec{p})=0.
\end{split}
\end{equation}
See \cite{Peskin95} for more details.  Now the field operator in (\ref{eq:15}) obeys the anticommutation relations (\ref{eq:13}), but the electron part on its own, given by
\begin{equation}
A_{\alpha}(\vec{x}) =\int \frac{d^3p}{(2\pi)^3}\frac{1}{\sqrt{2E_p}}\sum_{s=1,2}a^{s}_{\vec{p}}u^{s}_{\alpha}(\vec{p})e^{i\vec{p}.\vec{x}},
\end{equation}
does not satisfy
\begin{equation}
 \{A_{\alpha}(\vec{x}),A^{\dagger}_{\beta}(\vec{y})\}=\delta_{\alpha\beta}\delta^{(3)}(\vec{x}-\vec{y}).
\end{equation}
This means we cannot assign separate spatial fermionic modes to represent both an electron field and positron field.  This is related to the problem of not being able to localize electrons with only positive energy wavefunctions.  So we will work with $\psi_{\alpha}(\vec{x})$ directly.

In the continuum, the field operator $\psi_{\alpha}(\vec{x})$ obeys the Dirac equation.  For a discrete model, we represent the field by fermionic modes at each point evolving via the discrete spacetime evolution that we considered in section \ref{sec:Discrete Dirac Fermions in One Dimension} or \ref{sec:Discrete Dirac Fermions in Three Dimensions}.  We know that the continuum limit of this is free fermions obeying the Dirac equation.  This is essentially what is presented in \cite{D'Ariano12b}.  Each particle evolves in accordance with the Dirac equation in the continuum limit, but they do not always correspond to particles with positive energy.  This is essentially because the vacuum is taken to be the state annihilated by $\psi_{\vec{n},a}$ for all $\vec{n}$ and $a$.

The naive vacuum defined by $\psi_{\alpha}(\vec{x})\ket{\Omega}=0$ for all $\vec{x}$ and  $\alpha$ does not correspond to the physical vacuum in QFT, as the physical vacuum has no electrons or positrons present.  Hence, we require that the physical vacuum is annihilated by all $b^s_{\vec{p}}$ and $a^s_{\vec{p}}$.  We can create this physical vacuum $\ket{\Omega_D}$ by acting on $\ket{\Omega}$ with all $b^s_{\vec{p}}$ operators, which ensures that $b^s_{\vec{p}}\ket{\Omega_D}=0$.  (This is equivalent to the Dirac sea picture.  Viewed in that way, $b^s_{\vec{p}}$ creates a negative energy electron, and $b^{s\dagger}_{\vec{p}}$ creates a hole in the sea of negative energy particles.)  So in the discrete case we need to consider a new vacuum state analogous to the Dirac sea state in the continuum.  We postpone a detailed discussion of this problem for future work.

Finally, note that in \cite{DdV87} a causal fermionic model in discrete spacetime is given that becomes the massive Thirring model in one spatial dimension in the continuum limit.  In the massive Thirring model, evolution is governed by the Hamiltonian
\begin{equation}
 H=H_D+\frac{1}{2}g\!\int\! \textrm{d}xj^{\mu}(x)j_{\mu}(x),
\end{equation}
where $g$ is a constant, $H_D$ is the free continuum Dirac Hamiltonian in one spatial dimension (equation \ref{eq:HD} in section \ref{sec:Discrete Dirac Fermions in One Dimension}) and $j^{\mu}(x)$ is the current: $j^0(x)=\psi^{\dagger}(x)\psi(x)$ and $j^i(x)=\psi^{\dagger}(x)\alpha^i\psi(x)$.

\section{Simulating Causal Fermions}
\label{sec:Simulating Causal Fermions}
We have seen how to represent a causal fermionic unitary by applying local unitaries to a lattice of qubits, which is a type of quantum cellular automaton.  This tells us how to simulate the evolution of causal fermions on a quantum computer.

From a complexity point of view, simulating the evolution of these systems can be done efficiently because, for $\mathcal{N}$ fermionic modes, we have to apply $O(\mathcal{N})$ local unitaries: first the $O(\mathcal{N})$ unitaries $V_{\vec{x}\mu}$ (the unitaries implementing $U_B\mathbf{S}_{\vec{x}\mu}U_B^{\dagger}$ in the qubit representation), followed by the qubit realisation of the $O(\mathcal{N})$ fermionic swap operators.  As we saw, we may need to include additional qubits to ensure that these operators are local in the qubit picture.  Note also that the cost of applying $V_{\vec{x}\mu}$ does not grow with $\mathcal{N}$ \footnote{We take it for granted that the region on which $U_B\mathbf{S}_{\vec{x}\mu}U_B^{\dagger}$ is localized, the neighbourhood of $\vec{x}$, does not grow with $\mathcal{N}$.  This is a natural requirement: the discrete Dirac fermions we study in section \ref{sec:Discrete Dirac Fermions in One Dimension} have this property.}.  Furthermore, a lot of these operations can be done in parallel: all swap operations can be done simultaneously, while we can do many $V_{\vec{x}\mu}$ operations at the same time, provided the areas on which these unitaries are localized do not overlap.  As each $V_{\vec{x}\mu}$ is localized on hypercubes with length of side $3$ (because the evolution is causal), the time needed to implement one step of the evolution is $O(3^d)$, where $d$ is the lattice dimension, so the time does not depend on $\mathcal{N}$.

So much for implementing the evolution on a quantum computer.  We still have to prepare a $+1$ eigenstate of the pairs of Majorana fermions on the quantum computer.  This can be done efficiently by using a method presented in \cite{AL97} to deal with the strings of $Z$ operators that arise in the qubit representation.  The method is given in appendix \ref{sec:Preparing the Majorana state}.

Finally, there additional subtleties involved when trying to simulate quantum field theories.  These are discussed in \cite{JLP12}, which shows how to simulate $\phi^4$ theory on a quantum computer in a very different manner.

\section{Discussion}
\label{sec:Discussion}
Throughout this paper, we have discussed causal quantum systems in discrete spacetime.  The first results we obtained were theorem \ref{th:1} and corollary \ref{cor:1}, which allow us to decompose causal unitaries into a product of local unitaries in a manner analogous to that of \cite{ANW11,GNVW12}.  Later, we used this and a method for mapping local fermionic Hamiltonians to local qubit Hamiltonians to prove theorem \ref{th:2}, which showed that causal fermionic quantum systems in discrete spacetime can be viewed as lattices of qubits evolving causally in discrete time, meaning they are types of quantum cellular automaton.  After discussing specific discrete spacetime fermionic models, we showed why these systems can be efficiently simulated on a quantum computer.

The next objective is to devise causal discretized models that become interesting quantum field theories in the continuum limit, such as quantum electrodynamics or quantum chromodynamics, with the ultimate goal being the construction of causal models that reproduce the entire standard model in the continuum limit.  One of the reasons this is interesting is that this would allow simulation of these models by a quantum computer, something that seems feasible for purely fermionic systems.  For bosonic systems any such program would require a cut-off to allow the state of each bosonic mode to be represented by a finite number of qubits.

We hope that the results contained here may help in some way to find causal discrete spacetime models that converge to interesting physical systems in the continuum limit.  This would not only be useful for simulation but also as mathematically sensible (discretized) quantum field theories with a strict notion of causality that offer an alternative view to current discretized models.  Furthermore, the study of causal discrete models may even hint at physics beyond the standard model, particularly as it is sometimes suggested that at some small scale the notion of continuous spacetime may break down.

\acknowledgments
AJS acknowledges support from the Royal Society.  TCF acknowledges support from the Robert Gardiner Memorial Scholarship, CQIF, DAMTP, EPSRC and St John's College, Cambridge.  We would also like to thank Fay Dowker for pointing out reference \cite{DdV87}.

\bibliographystyle{unsrt}
\bibliography{QCA1,QFT1,Walks1,Other1,Sim1}

\appendix
\section{Continuous Time Models in Discrete Space}
\label{sec:Continuous Time Models in Discrete Space}
Here we will see why we cannot construct interesting discrete space systems in continuous time that are strictly causal.  Take a particle on a finite discrete line evolving in continuous time with some time independent Hamiltonian $H$.  Our strict notion of causality would imply that, if the particle is at position $0$ at $t=0$, then there is some $T$ such that for $t<T$ the particle has zero probability of being found outside a finite region $R$ containing $0$.

This tells us that, given any position $n$ outside $R$, we must have that, for all $t<T$, $\bra{n}e^{-iHt}\ket{0}=0$.  Expanding $e^{-iHt}$,
\begin{equation}
\begin{split}
 \bra{n}e^{-iHt}\ket{0} & =\bra{n}-iHt+O(t^2)\ket{0}=0\\
 \Rightarrow\ &\bra{n}iH\ket{0}-\bra{n}O(t)\ket{0}=0,
\end{split}
\end{equation}
but the second term can be made arbitrarily small by taking $t$ to be small, so the first term must be zero.  Similarly, by looking at higher order terms in the expansion of $e^{-iHt}$, we see that $\bra{n}H^l\ket{0}=0$ for any $l$.  But this implies that $\bra{n}e^{-iHt}\ket{0}=0$ for \emph{any} $t$.  It follows from this that the particle will not be found outside $R$ at any $t$.

Therefore, if the particle is ever going to propagate to a point $n$, it happens instantaneously, though possibly with a very small amplitude.

\section{Proofs of Lemmas \ref{lem:1} and \ref{lem:2}}
\label{sec:Proofs of Lemmas 1 and 2}
Here we prove lemmas \ref{lem:1} and \ref{lem:2}.  It will be useful to repeat the extra requirement we made of causal unitaries in section \ref{sec:Causal Fermions}.  Given a system of fermions evolving via $U$, we can add additional fermionic modes whose creation operators anticommute with the original fermion creation and annihilation operators while commuting with $U$.  This is essentially a discrete time analogue of the requirement that Hamiltonians are sums of even products of creation and annihilation operators in continuous time systems.

For simplicity of notation, we will assume here that these fermions have no extra degrees of freedom, though the extension to systems of fermions with extra degrees of freedom is straighforward.
\begin{lemma}
\label{lem:2}
 Given a unitary $U$ acting on fermions with annihilation operators $a_{\vec{x}}$, $U^{\dagger}a_{\vec{x}}U$ is a linear combination of odd products of fermion creation and annihilation operators.
\end{lemma}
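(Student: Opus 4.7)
The plan is to use the discrete-time analogue of the parity-superselection property assumed for $U$ in the preamble to the appendix: namely, that we may enlarge the system by an auxiliary fermionic mode $c,c^\dagger$ which anticommutes with every $a_{\vec{x}},a_{\vec{x}}^\dagger$ but commutes with $U$. The whole argument will hinge on playing the anticommutation of $c$ with $a_{\vec{x}}$ against its commutation with $U$.

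First, expand $Q := U^\dagger a_{\vec{x}} U$, working in the enlarged fermionic algebra, as a (finite in the finite-dimensional case, or otherwise formal) linear combination of monomials in the $a_{\vec{y}},a_{\vec{y}}^\dagger$, and split it according to the parity grading,
\begin{equation}
 Q = Q_e + Q_o,
\end{equation}
where $Q_e$ contains only the monomials of even total degree in creation/annihilation operators, and $Q_o$ the ones of odd total degree. The goal is to show $Q_e=0$. Since $c$ anticommutes with each generator $a_{\vec{y}},a_{\vec{y}}^\dagger$, it commutes with every even monomial and anticommutes with every odd monomial, so
\begin{equation}
 \{c,Q_o\}=0, \qquad \{c,Q_e\}=2cQ_e.
\end{equation}

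Next, compute $\{c,Q\}$ the other way, using that $c$ commutes with $U$ and $U^\dagger$ and anticommutes with $a_{\vec{x}}$:
\begin{equation}
 \{c,Q\} \;=\; \{c,U^\dagger a_{\vec{x}} U\} \;=\; U^\dagger \{c,a_{\vec{x}}\} U \;=\; 0.
\end{equation}
Combining the two expressions for $\{c,Q\}$ gives $cQ_e = 0$. To conclude $Q_e=0$, insert the identity $1=cc^\dagger + c^\dagger c$ and use that $Q_e$, not involving $c$ or $c^\dagger$, commutes with both $c$ and $c^\dagger$:
\begin{equation}
 Q_e = (cc^\dagger + c^\dagger c)Q_e = c(c^\dagger Q_e) + c^\dagger(cQ_e) = c\,Q_e c^\dagger + 0 = (cQ_e)c^\dagger = 0.
\end{equation}
Hence $Q = Q_o$, which is exactly the claim.

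The bookkeeping (even vs.\ odd monomials, and commutation of the auxiliary $c$ with each) is routine; the one conceptual point to get right is that the assumption we are permitted to append such a $c$ with $[c,U]=0$ is precisely the discrete-time substitute for ``Hamiltonians are even polynomials in creation/annihilation operators'', and so it is this structural assumption, not causality, that does the real work. The extension to fermions with extra internal indices $\mu$ is immediate: use the same argument mode by mode. No subtlety about infinite lattices enters here, since $Q_e$ is being shown to vanish as an operator identity from a purely algebraic manipulation.
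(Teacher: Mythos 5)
Your proof is correct and follows essentially the same route as the paper's: append an auxiliary mode $c$ that anticommutes with all $a_{\vec{y}},a_{\vec{y}}^{\dagger}$ but commutes with $U$, compute $\{c,U^{\dagger}a_{\vec{x}}U\}$ two ways, and conclude the even part vanishes. Your only addition is to spell out, via $1=cc^{\dagger}+c^{\dagger}c$, why $cQ_e=0$ forces $Q_e=0$ --- a step the paper leaves implicit --- which is a welcome bit of extra rigour but not a different argument.
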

\begin{proof}
We write $U^{\dagger}a_{\vec{x}}U=A_{odd}+A_{even}$, where $A_{odd}$ are all the terms that are products of an odd number of creation and annihilation operators and $A_{even}$ are all terms that are products of an even number of creation and annihilation operators.

Our extra requirement above implies that we can add a fermionic mode with creation operator $b^{\dagger}$, which anticommutes with all of the original creation and annihilation operators while commuting with $U$.  But this implies that $U^{\dagger}\{b^{\dagger},a_{\vec{x}}\}U=\{b^{\dagger},A_{odd}+A_{even}\}=0$, which is only possible if $A_{even}=0$.
\end{proof}
Next we prove Lemma \ref{lem:1} from section \ref{sec:Causal Fermions}.
\newtheorem*{lemma:1}{Lemma \ref{lem:1}}
\begin{lemma:1}
The inverse of a causal fermionic unitary $U$ is also a causal fermionic unitary.
\end{lemma:1}
\begin{proof}
Lemma \ref{lem:2} tells us that $U^{\dagger}a_{\vec{y}}U$ must be a linear combination of odd products of creation and annihilation operators.  So, since U is causal,
\begin{equation}
 \begin{split}
&\{U^{\dagger}a_{\vec{y}}U,a_{\vec{x}}\} =0,\\
\textrm{and}\ & \{U^{\dagger}a^{\dagger}_{\vec{y}}U,a_{\vec{x}}\}= 0
\end{split}
\end{equation}
for all $\vec{y}$ when $\vec{x}$ is not in the neighbourhood of $\vec{y}$.  Then
\begin{equation}
\begin{split}
&\{a_{\vec{y}},Ua_{\vec{x}}U^{\dagger}\}= 0,\\
\textrm{and}\ & \{a^{\dagger}_{\vec{y}},Ua_{\vec{x}}U^{\dagger}\}= 0
\end{split}
\end{equation}
for all $\vec{y}$ when $\vec{x}$ is not in the neighbourhood of $\vec{y}$.  But $\vec{x}$ is in the neighbourhood of $\vec{y}$ implies $\vec{y}$ is in the neighbourhood of $\vec{x}$.  It follows that $Ua_{\vec{x}}U^{\dagger}$ is localized on the neighbourhood of $\vec{x}$, so $U^{\dagger}$ is causal.
\end{proof}

\section{The Fermionic swap operator}
\label{app:swap} 
Given two fermionic modes with annihilation operators $a$ and $b$, it is useful to define a unitary operator which swaps them:
\begin{equation} 
S^{\dagger} a S = b, \qquad S^{\dagger} b S = a.
\end{equation}
Here we show that this unitary is given by 
\begin{equation} 
S= \exp[i\frac{\pi}{2}(b^{\dagger}-a^{\dagger})(b-a)],
\end{equation} 
which is self-adjoint.  To see this, define two new fermionic modes 
\begin{equation}
c = \frac{1}{\sqrt{2}} (b - a),  \qquad d =  \frac{1}{\sqrt{2}} (b + a),
\end{equation} 
which satisfy the usual anticommutation relations. Then $S =  \exp[i \pi c^{\dagger} c ]$, and 
\begin{equation}
S a S =e^{i \pi c^{\dagger} c } \frac{1}{\sqrt{2}} \left(d -c   \right)e^{-i \pi c^{\dagger} c}  \nonumber \\
=  \frac{1}{\sqrt{2}}(d + c)  = b,
\end{equation} 
where we have used the fact that $(c^{\dagger} c) c= 0$, and $c (c^{\dagger} c) = c$. Similarly, it is easy to see that $S b S = a$.

\section{Representing Local Fermionic Hamiltonians by Local Qubit Hamiltonians}
\label{sec:Representing Local Fermionic Hamiltonians by Local Qubit Hamiltonians}
Here we will see the main idea from \cite{Ball05,VC05}.  In section \ref{sec:More than one spatial dimension} and appendix \ref{sec:Representation by Qubits 2}, we apply this technique to local fermionic operators.  In \cite{Ball05,VC05}, however, they apply it to local fermionic Hamiltonians, which are sums of local fermionic operators.

To illustrate the idea, let us look at a local fermionic Hamiltonian.  By including additional fermionic modes, we will construct a local fermionic Hamiltonian that replicates the dynamics of the original Hamiltonian but is local in the qubit representation.  As an example, take
\begin{equation}
\label{eq:5}
 H=\sum_{<\vec{x}\vec{y}>}(a^{\dagger}_{\vec{x}}a^{\ }_{\vec{y}}+a^{\dagger}_{\vec{y}}a^{\ }_{\vec{x}}),
\end{equation}
where $\langle \vec{x}\vec{y}\rangle$ denotes nearest neighbour pairs, and we are considering a rectangular lattice as shown in figure \ref{fig:1}.
\begin{figure}[!ht]
{\centering
\resizebox{5.5cm}{!}{\input{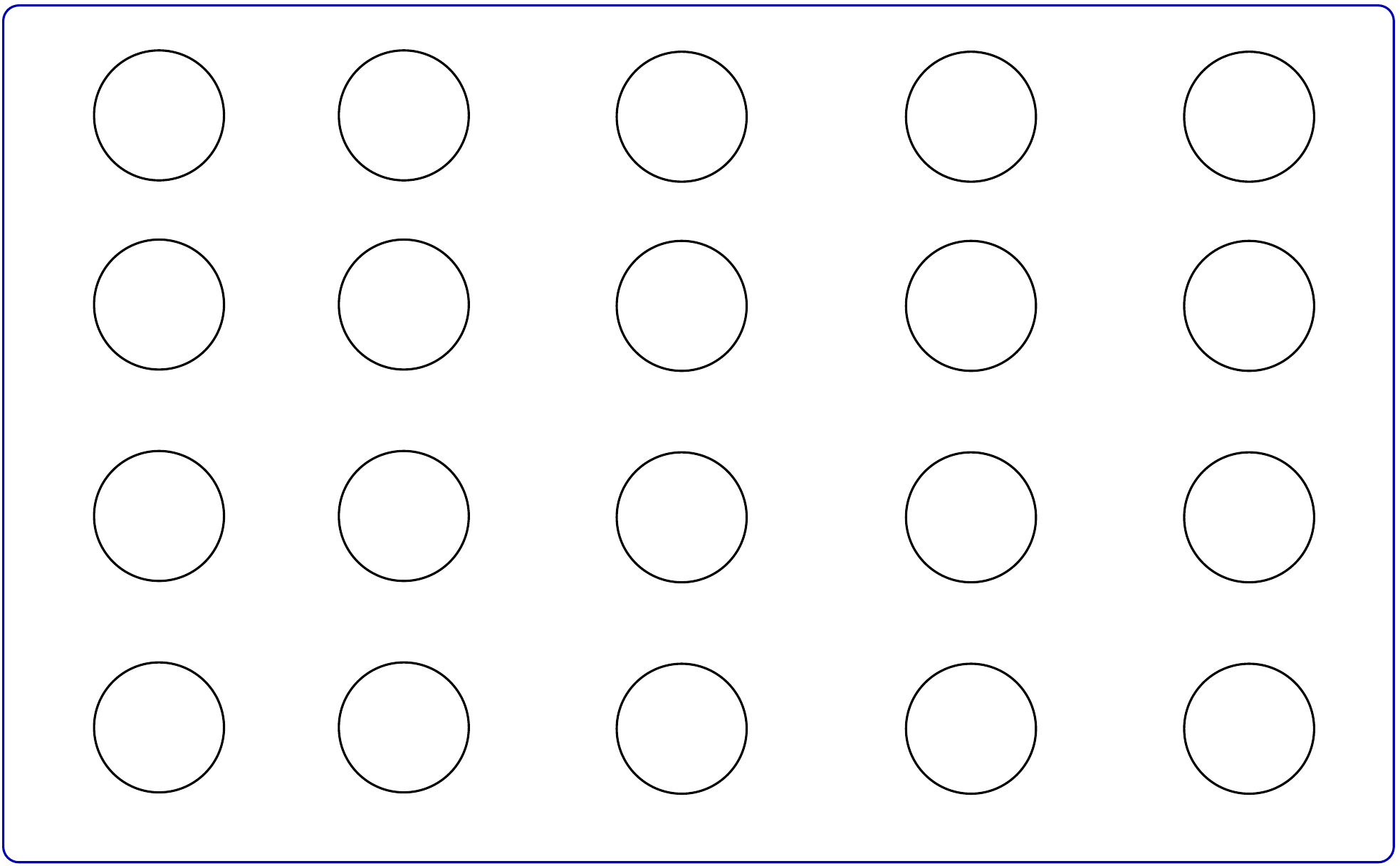_t}} \caption{A lattice of $5 \times 4$ different sites, with Jordan-Wigner Transformation ordering, $\pi(\vec{x})$, as shown.}\label{fig:1}
}
\end{figure}
With the Jordan-Wigner Transform given by (\ref{eq:general ordering}), and the ordering for the Jordan-Wigner Transformation shown in figure \ref{fig:1}, all of the vertical hopping terms in (\ref{eq:5}) will be non local in the spin picture.  Furthermore, with any other choice of ordering, some terms in (\ref{eq:5}) would be non local in the spin picture.  And, as we consider bigger and bigger lattices, the length of the strings of $Z$ operators in the non local terms will grow.

We want to eliminate these non local strings of $Z$s.  To do this, we will introduce extra fermions to cancel the strings of $Z$s in the qubit representation.

In particular, we introduce a pair of additional fermionic modes whenever two sites are connected in the Hamiltonian by a hopping term $a_{\vec{x}}^{\dagger} a_{\vec{y}} + a_{\vec{y}}^{\dagger} a_{\vec{x}}$ that is not local in the qubit picture (for example, when $\pi(\vec{x})=0$, $\pi({\vec{y}})=5$ in figure \ref{fig:1}).  We introduce one additional fermionic mode at site ${\vec{x}}$ and one at site ${\vec{y}}$, with creation operators denoted by $c_{(\vec{x},\vec{y})}^{\dagger}$ and $c_{(\vec{y},\vec{x})}^{\dagger}$ respectively. The first index in the subscript gives the location of the ancillary fermion, and the second indicates the hopping destination.  We define the operators
\begin{equation}
\begin{split}
 m_{(\vec{x},\vec{y})} & =c_{(\vec{x},\vec{y})}+c_{(\vec{x},\vec{y})}^{\dagger}\\
m_{(\vec{y},\vec{x})} & =c_{(\vec{y},\vec{x})}+c_{(\vec{y},\vec{x})}^{\dagger},
\end{split}
\end{equation}
which are self-adjoint and satisfy
\begin{equation}
 \begin{split}
\{m_{(\vec{x},\vec{y})},m_{(\vec{w},\vec{z})}\} & =2\delta_{\vec{x}\vec{w}}\delta_{\vec{y}\vec{z}}\\
\{m_{(\vec{x},\vec{y})},a_{\vec{z}}\} & =0.
\end{split}
\end{equation}
These can be thought of as Majorana fermion operators.

Now define the operator $M_{(\vec{x},\vec{y})}=im_{(\vec{x},\vec{y})}m_{(\vec{y},\vec{x})}$ and note that it is self-adjoint and has eigenvalues $+1$ and $-1$ since $M_{(\vec{x},\vec{y})}^2=1$ and $M_{(\vec{x},\vec{y})}\neq -1$.

Next, we make the substitution 
\begin{equation} 
a_{\vec{x}}^{\dagger} a_{\vec{y}} + a_{\vec{y}}^{\dagger} a_{\vec{x}}\rightarrow a_{\vec{x}}^{\dagger} M_{(\vec{x},\vec{y})} a_{\vec{y}} + a_{\vec{y}}^{\dagger} M_{(\vec{x},\vec{y})} a_{\vec{x}}
\end{equation}
in the Hamiltonian in \ref{eq:5} whenever sites $\vec{x}$ and $\vec{y}$ are not adjacent in the ordering scheme. As all of the operators $M_{(\vec{x},\vec{y})}$ commute, there exists a joint eigenstate with eigenvalue $+1$ for each operator.  (This is the reason we must distinguish between the ancillary modes $c_{(\vec{x},\vec{y})}$ and $c_{(\vec{x},\vec{z})}$. If we were to replace both with a single mode $c_{\vec{x}}$  then the operators $M_{(\vec{x},\vec{y})}$ and $M_{(\vec{x},\vec{z})}$ would not commute.)  When the ancillary modes  are prepared in this state, then the action of the transformed Hamiltonian on the original fermions will be the same as that of the original Hamiltonian.

It is natural to choose the new ordering such that fermions at the same site are consecutive because, as we mentioned in section \ref{sec:Fermions and the Jordan-Wigner Transformation}, this means that the product of an even number of fermionic creation and annihilation operators at the same site will always be local in the qubit representation.  Then, because $m_{(\vec{x},\vec{y})}$ is a fermionic operator at site $\vec{x}$, the operator $a^{\dagger}_{\vec{x}}m_{(\vec{x},\vec{y})}$ is local in the qubit representation.  Therefore it follows that
\begin{equation} 
a_{\vec{x}}^{\dagger} M_{(\vec{x},\vec{y})} a_{\vec{y}} + a_{\vec{y}}^{\dagger} M_{(\vec{x},\vec{y})} a_{\vec{x}}
\end{equation}
is local in the qubit representation.  The same trick will work for any local fermionic Hamiltonian that is quadratic in fermion creation and annihilation operators.

Furthermore, even if the fermionic Hamiltonian has higher order terms (provided they only contain even products of creation and annihilation operators), like $a^{\dagger}_{\vec{x}}a^{\dagger}_{\vec{y}}a_{\vec{w}}a_{\vec{z}}+a^{\dagger}_{\vec{z}}a^{\dagger}_{\vec{w}}a_{\vec{y}}a_{\vec{x}}$, for example, we can still make this local in the qubit picture by adding more Majorana fermions.  For this example, we can replace $a^{\dagger}_{\vec{x}}a^{\dagger}_{\vec{y}}a_{\vec{w}}a_{\vec{z}}$ by
\begin{equation}
a^{\dagger}_{\vec{x}}M_{(\vec{x},\vec{y})}a^{\dagger}_{\vec{y}}a_{\vec{w}}M_{(\vec{w},\vec{z})}a_{\vec{z}}.
\end{equation}

By using this trick, any local fermionic Hamiltonian has a corresponding local qubit Hamiltonian.  Also, each term in the qubit Hamiltonian connects the same sets of lattice sites as the corresponding term in the original fermionic Hamiltonian 

Furthermore, because $[a^{\dagger}_{\vec{z}},M_{(\vec{x},\vec{y})}]=0$ for any $\vec{x}$, $\vec{y}$ and $\vec{z}$, it follows that we can act on a $+1$ eigenstate of all the $M_{(\vec{x},\vec{y})}$ terms with physical fermion creation operators $a^{\dagger}_{\vec{z}}$ and still have a $+1$ eigenstate.  Therefore the original fermionic dynamics can be viewed as a subsector of the dynamics of the corresponding local qubit Hamiltonian.

Note that the requirement that the Hamiltonian is a sum of even products of creation and annihilation operators was necessary to use this trick.  Furthermore, the fact that we could preserve the self-adjointness of the Hamiltonian was also crucial.

Finally, note that the number of additional Majorana fermions we need does not depend on the number of physical fermionic modes at each site: we only need one pair of Majorana fermions for all terms in the Hamiltonian connecting a particular pair of sites that are non local in the qubit representation.

\section{Local Unitaries for Discrete Weyl Fermions in Three Dimensions}
\label{sec:Local Unitaries for Discrete Weyl Fermions in Three Dimensions}
Here we will see that we can decompose the discrete evolution operator $U=T_1T_2T_3$ into a product of local unitaries.  Each $T_i$ is a conditional shift operator in the $i$th spatial direction that can be rewritten as a product of local swap operations.

It will be convenient to change notation slightly, so that
\begin{equation}
 \psi_{\vec{n}}=\begin{pmatrix}
           \psi_{(\vec{n},\uparrow_z)}\\
           \psi_{(\vec{n},\downarrow_z)}
          \end{pmatrix}.
\end{equation}
Then $T_3$, the conditional shift in the $z$ direction, is
\begin{equation}
\begin{split}
T_3\psi_{(\vec{n},\uparrow_z)}T_3^{\dagger}&=\psi_{(\vec{n}+\vec{e}_z,\uparrow_z)}\\
T_3\psi_{(\vec{n},\downarrow_z)}T_3^{\dagger}&=\psi_{(\vec{n}-\vec{e}_z,\downarrow_z)},
\end{split}
\end{equation}
where $\vec{e}^{\ T}_z=(0,0,1)$.  But we already know how to rewrite this in terms of local swap operations, which we did in the one dimensional case in equations \ref{eq:1} and \ref{eq:2}.  So $T_3$ is equivalent to applying
\begin{equation}
  \psi_{(\vec{n},\downarrow_z)}  \leftrightarrow \psi_{(\vec{n}-\vec{e}_z,\uparrow_z)},
\end{equation}
at each $\vec{n}$, followed by
\begin{equation}
   \psi_{(\vec{n},\uparrow_z)}  \leftrightarrow \psi_{(\vec{n},\downarrow_z)},
\end{equation}
at each $\vec{n}$.  Similarly, $T_1$ and $T_2$ are conditional shifts depending on the internal degree of freedom of the particle.  We define
\begin{equation}
\begin{split}
\psi_{(\vec{n},\uparrow_x)}=\textstyle{\frac{1}{\sqrt{2}}}(\psi_{(\vec{n},\uparrow_z)}+\psi_{(\vec{n},\downarrow_z)})\\
\psi_{(\vec{n},\downarrow_x)}=\textstyle{\frac{1}{\sqrt{2}}}(\psi_{(\vec{n},\uparrow_z)}-\psi_{(\vec{n},\downarrow_z)}),
\end{split}
\end{equation}
so that
\begin{equation}
\begin{split}
T_1\psi_{(\vec{n},\uparrow_x)}T_1^{\dagger}&=\psi_{(\vec{n}+\vec{e}_x,\uparrow_x)}\\
T_1\psi_{(\vec{n},\downarrow_x)}T_1^{\dagger}&=\psi_{(\vec{n}-\vec{e}_x,\downarrow_x)},
\end{split}
\end{equation}
where $\vec{e}^{\ T}_x=(1,0,0)$.  And this has a decomposition in terms of local swaps also:
\begin{equation}
  \psi_{(\vec{n},\downarrow_x)}  \leftrightarrow \psi_{(\vec{n}-\vec{e}_x,\uparrow_x)},
\end{equation}
at each $\vec{n}$, followed by
\begin{equation}
   \psi_{(\vec{n},\uparrow_x)}  \leftrightarrow \psi_{(\vec{n},\downarrow_x)},
\end{equation}
at each $\vec{n}$.  An analogous decomposition holds for $T_2$, with
\begin{equation}
\begin{split}
\psi_{(\vec{n},\uparrow_y)}=\textstyle{\frac{1}{\sqrt{2}}}(\psi_{(\vec{n},\uparrow_z)}-i\psi_{(\vec{n},\downarrow_z)})\\
\psi_{(\vec{n},\downarrow_y)}=\textstyle{\frac{1}{\sqrt{2}}}(\psi_{(\vec{n},\uparrow_z)}+i\psi_{(\vec{n},\downarrow_z)}).
\end{split}
\end{equation}
Note that $\psi^{\dagger}_{(\vec{n},\uparrow_y)}$ creates a fermion with spin up in the $y$ direction.

\section{Infinite Systems}
\label{sec:Infinite Systems}
Here we will prove our main results when the spatial lattice is infinite.  Note that, if we are concerned only with finite times and finite regions of space, which is the case for simulations, we do not need to consider infinite lattices, as the evolution is strictly causal.  That said, the extension to infinite lattices emphasizes the fact that causal fermions are analogues of quantum cellular automata, and that theorem \ref{th:1} is a fermionic analogue of the main result of \cite{ANW11,GNVW12}, as that result is proved for systems on infinite lattices.

When dealing with quantum systems composed of infinitely many subsystems, it is not clear at first glance what we should take as our Hilbert space.  This is because an infinite tensor product of Hilbert spaces is not sensible.  One way to get around this is to take local operators to be fundamental and represent them by elements of an abstract algebra.  Then define states as functionals of the elements of the algebra.  This is the C*-algebra approach \cite{BR97}.

We will now give the precise definition of a C*-algebra.  First, a complex algebra is a complex vector space with a product operation that is associative and distributive over addition.
\begin{defin}
 A C*-algebra $\mathcal{A}$ is a complex algebra with a norm $\|\cdot\|$, in which it is complete, and an anti-linear map $A\rightarrow A^{*}$, with the following properties:
\begin{enumerate}
\centering
 \item $(AB)^*=B^*A^*$
 \item $\|AB\|\leq \|A\|\|B\|$
 \item $\|A^*\|=\|A\|$
 \item $\|A^*A\|=\|A\|^2.$
\end{enumerate}
\end{defin}
A simple example of a C*-algebra is $\mathcal{M}_n(\mathbb{C})$, the set of $n\times n$ complex matrices, where the norm is the spectral norm (the largest singular value) and the * operation is the hermitian conjugate.  Because it is finite dimensional, this example misses out on the subtleties associated with infinite dimensional vector spaces.

We assume that all C*-algebras we consider have an identity, denoted $I$.

Next we define states on the C*-algebra.
\begin{defin}
 A state on a C*-algebra $\mathcal{A}$ is a linear functional $\rho$ that is positive, meaning $\rho(B^*B)\geq 0\ \forall B\in\mathcal{A}$, and normalized, meaning $\rho(I)=1$.
\end{defin}
For an infinite spin chain, a simple example of a state is all spins pointing up in the x-direction.  In the finite dimensional case, for any state $\rho$ there is a density operator $\sigma$ such that $\rho(A)=\textrm{tr}[\sigma A]$ for any $A$ in the C*-algebra.

Next, we require that the evolution is an invertible map on the C*-algebra that preserves its structure.  Such a map is called an automorphism.
\begin{defin}
An automorphism $\alpha$ is an invertible linear map on the C*-algebra that satisfies
\begin{enumerate}
 \item \centering $\alpha(A)\alpha(B)=\alpha(AB)$
 \item $\alpha(A^*)=\alpha(A)^*$
 \item $\|\alpha(A)\|=\|A\|$.
\end{enumerate}
\end{defin}
The first property implies that the dynamics preserve commutation or anticommutation relations.  An example of an automorphism is $A\rightarrow U^{*}AU$ for any $U$ in the C*-algebra satisfying $UU^*=U^*U=I$.

As we are working in a discrete spacetime picture, there is a subalgebra associated to every spatial point.  We can define a notion of causality for these systems that is a natural extension of the definitions we had in the finite case.
\begin{defin}
An automorphism $\alpha$, is causal if, for any $\vec{x}$ and any $A$ localized on $\vec{x}$, $\alpha(A)$ is localized on the neighbourhood of $\vec{x}$.
\end{defin}
So, if an automorphism is causal, then observables on $\vec{x}$ cannot spread by more  than one step in each direction (i.e. the size of the neighbourhood of $\vec{x}$) after every timestep.

\subsection{Quantum Lattice Systems and QCA}
The C*-algebra for a quantum lattice system is defined by associating elements of the algebra to finite regions of the lattice, with the property that elements associated to $\Lambda$ and $\Lambda^{\prime}$ commute if $\Lambda \cap \Lambda^{\prime}=\varnothing$.  Furthermore, the set of all elements associated to a finite $\Lambda$ is isomorphic to the set of operators we get by assigning finite dimensional Hilbert spaces to the systems in $\Lambda$.  Also, the norm of elements in a finite region $\Lambda$ is just the operator norm on the corresponding operators in the Hilbert space picture.  So locally the C*-algebra looks like a finite quantum system.  For the example of a line of qubits, we have that the algebra associated to each site is equivalent to $\mathcal{M}_2(\mathbb{C})$.

It is a useful result \cite{BR97} that to specify a state we need only specify a family of states $\rho_{\Lambda}$ on every finite region $\Lambda$, with the consistency condition that, if $\Lambda\subseteq \Lambda^{\prime}$,
\begin{equation}
 \rho_{\Lambda}(A)=\rho_{\Lambda^{\prime}}(A),
\end{equation}
where $A$ is an element of the algebra associated to region $\Lambda$.

The precise definition of a Quantum Cellular Automaton is as follows.
\begin{defin}
 A Quantum Cellular Automaton is a quantum lattice system together with evolution over discrete timesteps via a causal automorphism.
\end{defin}

\subsection{Fermions}
\label{sec:Fermions}
For fermions, the C*-algebra is generated by objects satisfying the canonical anticommutation relations:
\begin{equation}
 \begin{split}
 \{a^{\dagger}_{\vec{x}},a_{\vec{y}}\} & =\delta_{\vec{x}\vec{y}}\\
 \{a_{\vec{x}},a_{\vec{y}}\} & =0,
\end{split}
\end{equation}
as in the finite case, but now $\vec{x}\in \mathbb{Z}^d$.  We will refer to $a^{\dagger}_{\vec{x}}$ and $a_{\vec{x}}$ as creation and annihilation operators even though they are no longer operators, rather elements of an abstract algebra.  Also, we will use a dagger to denote the * operation.  To simplify notation, we will assume that there is only one fermionic mode at each spatial point, but all of the following results hold with extra degrees of freedom.

As in the finite case, we say that a fermionic operator is localized on a spatial region $R$ if it can be written in terms of creation and annihilation operators corresponding only to $R$.

In section \ref{sec:Causal Fermions}, we justified an extra requirement on the evolution of fermions: given a system of fermions evolving via a unitary $U$, we can add additional fermionic modes whose creation operators anticommute with the original fermion creation operators while commuting with $U$.  Here, we make an analogous requirement on the evolution.
\begin{requirement}
\label{rec:1}
Given a system of fermions evolving via a causal automorphism $\alpha$, we can add additional fermionic modes whose creation operators anticommute with the original fermion creation operators but are unchanged by $\alpha$.
\end{requirement}  

Before moving on to the local decomposition for causal evolutions, we will prove two useful lemmas.
\begin{lemma}
\label{lem:3}
 Given a causal automorphism $\alpha$ of fermions with annihilation operators $a_{\vec{x}}$ satisfying the above requirement, $\alpha(a_{\vec{x}})$ is a linear combination of odd products of fermion creation and annihilation operators.
\end{lemma}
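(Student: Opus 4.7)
The plan is to mimic the proof of lemma \ref{lem:2}, adapting it to the C*-algebra setting. By causality of $\alpha$, the image $\alpha(a_{\vec{x}})$ lies in the subalgebra associated to the neighbourhood of $\vec{x}$, which is a finite-dimensional fermionic algebra. Every element of such a finite-dimensional fermionic algebra admits a unique decomposition into an odd part and an even part, corresponding to the $\pm 1$ eigenspaces of the local fermion parity operator. I would therefore write $\alpha(a_{\vec{x}}) = A_{odd} + A_{even}$, and the goal becomes to show that $A_{even} = 0$.

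The key step is to use requirement \ref{rec:1} to extend the system by an extra fermionic mode with annihilation operator $b$ whose creation operator $b^{\dagger}$ anticommutes with every original operator and is fixed by $\alpha$. Since $\{b^{\dagger}, a_{\vec{x}}\} = 0$ and $\alpha$ is an automorphism,
\[
0 = \alpha(\{b^{\dagger}, a_{\vec{x}}\}) = \{b^{\dagger}, A_{odd}\} + \{b^{\dagger}, A_{even}\}.
\]
Because $b^{\dagger}$ anticommutes with every original creation or annihilation operator, it anticommutes with every odd product and commutes with every even product; hence $\{b^{\dagger}, A_{odd}\} = 0$ identically, while $\{b^{\dagger}, A_{even}\} = 2 b^{\dagger} A_{even}$. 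It follows that $b^{\dagger} A_{even} = 0$.

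To finish, I would deduce $A_{even} = 0$ from this relation using the canonical anticommutation relations. Multiplying on the left by $b$ gives $b b^{\dagger} A_{even} = 0$, i.e.\ $(1 - b^{\dagger} b) A_{even} = 0$, so $A_{even} = b^{\dagger} b A_{even}$. Since $b$ anticommutes with each original operator and $A_{even}$ is a sum of products of even numbers of such operators, $b$ commutes with $A_{even}$, and therefore $b^{\dagger} b A_{even} = b^{\dagger} A_{even} b = 0$, as required.

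The main technical obstacle is this last passage from $b^{\dagger} A_{even} = 0$ to $A_{even} = 0$: in the finite-matrix setting of lemma \ref{lem:2} it is essentially immediate, but in the abstract C*-algebra setting it requires the short calculation above, which uses the canonical anticommutation relation $\{b, b^{\dagger}\} = 1$ in an essential way. The unique odd/even decomposition in the first step is also worth articulating carefully, but it reduces to the well-understood $\mathbb{Z}_2$-grading of the finite-dimensional local fermionic algebra.
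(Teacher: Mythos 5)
Your proof is correct and follows essentially the same route as the paper's: decompose $\alpha(a_{\vec{x}})$ into odd and even parts, adjoin an extra mode $b^{\dagger}$ that is fixed by $\alpha$ via requirement \ref{rec:1}, and conclude $A_{even}=0$ from $\alpha(\{b^{\dagger},a_{\vec{x}}\})=0$. The only difference is that you explicitly justify the final step $\{b^{\dagger},A_{even}\}=2b^{\dagger}A_{even}=0\Rightarrow A_{even}=0$ using $\{b,b^{\dagger}\}=1$, which the paper states as ``only possible if $A_{even}=0$'' without elaboration.
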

\begin{proof}
We write $\alpha(a_{\vec{x}})=A_{odd}+A_{even}$, where $A_{odd}$ is the sum of all the terms that are products of an odd number of creation and annihilation operators and $A_{even}$ is the sum of all terms that are products of an even number of creation and annihilation operators.

Our extra requirement above implies that we can add a fermionic mode with creation operator $b^{\dagger}$, which anticommutes with all of the original creation and annihilation operators while having $\alpha(b^{\dagger})=b^{\dagger}$.  But this implies that $\alpha(\{b^{\dagger},a_{\vec{x}}\})=\{b^{\dagger},A_{odd}+A_{even}\}=0$, which is only possible if $A_{even}=0$.
\end{proof}
Next we give another useful lemma.
\begin{lemma}
The inverse of a causal fermionic evolution $\alpha$ is also a causal fermionic evolution.
\end{lemma}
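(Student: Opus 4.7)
The plan is to mirror the finite-dimensional proof of Lemma \ref{lem:1} almost verbatim, replacing conjugation by a unitary $U$ with application of the automorphism $\alpha$, and replacing Lemma \ref{lem:2} with the freshly proved Lemma \ref{lem:3}. The key fermionic fact being exploited is that an odd element localized in one region anticommutes with any odd element localized in a disjoint region, so we can detect where an image sits by probing with creation and annihilation operators at distant sites.

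First I would observe that, by causality, $\alpha(a_{\vec{y}})$ is localized on the neighbourhood of $\vec{y}$, and by Lemma \ref{lem:3} it is a sum of odd products of creation and annihilation operators. Hence for any $\vec{x}$ lying outside the neighbourhood of $\vec{y}$, the odd elements $a_{\vec{x}}$ and $a^{\dagger}_{\vec{x}}$ are localized on the complement of the support of $\alpha(a_{\vec{y}})$, so
\begin{equation}
\{\alpha(a_{\vec{y}}), a_{\vec{x}}\} = 0, \qquad \{\alpha(a_{\vec{y}}), a^{\dagger}_{\vec{x}}\} = 0.
\end{equation}
Applying the automorphism $\alpha^{-1}$ (which preserves products and the $*$-operation) then gives
\begin{equation}
\{a_{\vec{y}}, \alpha^{-1}(a_{\vec{x}})\} = 0, \qquad \{a^{\dagger}_{\vec{y}}, \alpha^{-1}(a_{\vec{x}})\} = 0,
\end{equation}
and because the neighbourhood relation is symmetric, these identities hold whenever $\vec{y}$ is outside the neighbourhood of $\vec{x}$.

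Next I would note that $\alpha^{-1}$ inherits Requirement \ref{rec:1} from $\alpha$: any auxiliary mode $b^{\dagger}$ fixed by $\alpha$ is automatically fixed by $\alpha^{-1}$. Consequently Lemma \ref{lem:3} applies to $\alpha^{-1}$ as well, and $\alpha^{-1}(a_{\vec{x}})$ is itself a sum of odd products. Combining this parity information with the anticommutation relations above, one reads off that $\alpha^{-1}(a_{\vec{x}})$ is localized on the neighbourhood of $\vec{x}$, which is exactly the statement that $\alpha^{-1}$ is causal.

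The main obstacle is the last implication: in the infinite setting one must justify that an odd element of the CAR algebra which anticommutes with every $a_{\vec{y}}$ and $a^{\dagger}_{\vec{y}}$ for $\vec{y}$ outside a finite region $R$ is actually an element of the local subalgebra $\mathcal{A}(R)$. In a finite system this is immediate from expanding in the monomial basis of creation/annihilation operators, since any basis term with support outside $R$ would give a nonzero anticommutator with some $a_{\vec{y}}$ or $a^{\dagger}_{\vec{y}}$. For the C*-algebraic setting, the same argument works once one invokes the standard structural property of the CAR algebra that $\mathcal{A}(R)$ coincides with the graded commutant of $\mathcal{A}(R^{c})$, which allows a density/approximation argument rather than a finite expansion. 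This graded-commutant property is the one place where care beyond the finite case is genuinely needed.
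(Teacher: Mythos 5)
Your proof follows the same route as the paper's: Lemma \ref{lem:3} plus causality gives the vanishing anticommutators of $\alpha(a_{\vec{y}})$ with distant generators, the automorphism property transfers these to anticommutators of $\alpha^{-1}(a_{\vec{x}})$ with $a_{\vec{y}}$ and $a^{\dagger}_{\vec{y}}$ outside the neighbourhood, and the symmetry of the neighbourhood relation yields localization. The only difference is that you spell out the final implication (that anticommuting with every generator outside a region forces membership in that region's local algebra), which the paper asserts with ``it follows that''; your identification of the graded-commutant property of the CAR algebra as the one point needing extra care in the C*-setting is a correct refinement rather than a different argument.
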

\begin{proof}
Lemma \ref{lem:3} tells us that $\alpha(a_{\vec{y}})$ must be a linear combination of odd products of creation and annihilation operators.  So, since $\alpha$ is causal,
\begin{equation}
 \begin{split}
&\{\alpha(a_{\vec{y}}),a_{\vec{x}}\} =0,\\
\textrm{and}\ & \{\alpha(a^{\dagger}_{\vec{y}}),a_{\vec{x}}\}= 0
\end{split}
\end{equation}
for all $\vec{y}$ when $\vec{x}$ is not in the neighbourhood of $\vec{y}$.  Since $\alpha^{-1}$ is an automorphism,
\begin{equation}
\begin{split}
&\{a_{\vec{y}},\alpha^{-1}(a_{\vec{x}})\}= 0,\\
\textrm{and}\ & \{a^{\dagger}_{\vec{y}},\alpha^{-1}(a_{\vec{x}})\}= 0
\end{split}
\end{equation}
for all $\vec{y}$ when $\vec{x}$ is not in the neighbourhood of $\vec{y}$.  But $\vec{x}$ is in the neighbourhood of $\vec{y}$ implies $\vec{y}$ is in the neighbourhood of $\vec{x}$.  It follows that $\alpha^{-1}(a_{\vec{x}})$ is localized on the neighbourhood of $\vec{x}$, so $\alpha^{-1}$ is causal.
\end{proof}

We are now ready to decompose causal evolutions into products of local unitaries.

\subsection{Local Decomposition of Causal Evolutions}
We denote the automorphism that swaps a fermionic mode, with annihilation operator $a_{\vec{x}}$, and its copy, with annihilation operator $b_{\vec{x}}$, by $s_{\vec{x}}$.  This means that $s_{\vec{x}}(a_{\vec{x}})=b_{\vec{x}}$ and $s_{\vec{x}}(b_{\vec{x}})=a_{\vec{x}}$.  Note that we denote the composition of two automorphisms $\alpha_1$ and $\alpha_2$ by $\alpha_1\alpha_2$, meaning $\alpha_1\alpha_2(A)=\alpha_1(\alpha_2(A))$.
\begin{theorem}
Given a system of fermions, evolving via a causal evolution $\alpha$, the evolution of two copies of this system via $\alpha\beta^{-1}$, where $\beta$ is equivalent to $\alpha$ but acting on the copy system, can be decomposed into local unitaries:
\begin{equation}
 \alpha\beta^{-1}=\prod_{\vec{x}}(s_{\vec{x}})\prod_{\vec{y}}[\beta s_{\vec{y}}\beta^{-1}],
\end{equation}
where $s_{\vec{x}}$ and $\beta s_{\vec{y}}\beta^{-1}$ are equivalent to conjugation by commuting local unitaries.
\end{theorem}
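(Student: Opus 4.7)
The plan is to mirror the proof of Theorem 1 in the C*-algebra setting, with the main new ingredient being the proper interpretation of the infinite products as automorphisms of the quasi-local algebra. Along the way I need to invoke the preceding lemma (causality of the inverse) and requirement 1 in essentially the same roles they played in the finite case.

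First I would establish that each factor on the right hand side is a well-defined local automorphism. The on-site swap $s_{\vec{x}}$ is intrinsically local since it acts non-trivially only on $a_{\vec{x}}$ and $b_{\vec{x}}$, and is implemented by conjugation with the unitary $\mathbf{S}_{\vec{x}}=\exp[i\frac{\pi}{2}(b^{\dagger}_{\vec{x}}-a^{\dagger}_{\vec{x}})(b_{\vec{x}}-a_{\vec{x}})]$ sitting inside the algebra of the neighbourhood of $\vec{x}$. For $\beta s_{\vec{x}}\beta^{-1}$, the previous lemma tells us $\beta^{-1}$ is causal, so $\beta^{-1}(a_{\vec{x}})$ and $\beta^{-1}(b_{\vec{x}})$ are localized in the neighbourhood of $\vec{x}$; hence $\beta s_{\vec{x}}\beta^{-1}$ acts trivially on every mode outside this neighbourhood and is implemented by conjugation with the local unitary $\exp[i\frac{\pi}{2}(b'^{\dagger}_{\vec{x}}-a'^{\dagger}_{\vec{x}})(b'_{\vec{x}}-a'_{\vec{x}})]$, where primed operators denote the images under $\beta^{-1}$.

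Next I would interpret the infinite products. For any local element $A$ associated with a finite region $\Lambda$, only finitely many factors in either product can move $A$: in the first product only those $s_{\vec{x}}$ with $\vec{x}\in\Lambda$, and in the second only those $\beta s_{\vec{y}}\beta^{-1}$ whose neighbourhood intersects $\Lambda$. Since the individual automorphisms act on disjoint sets of modes on that finite set of sites, they commute, and so the product is unambiguously defined on $A$. Extending by the norm-preservation property then yields an automorphism of the full quasi-local algebra. With these preliminaries, the core computation repeats the finite case: writing $s=\prod_{\vec{x}}s_{\vec{x}}$, the two products combine into $s\beta s\beta^{-1}$, and because $s$ exchanges every $a$-mode with its copy while $\beta$ acts on the $b$-modes precisely as $\alpha$ acts on the $a$-modes (Requirement 1 ensuring that $\alpha$ leaves $b$-operators fixed and vice versa), one verifies on generators that $s\beta s=\alpha$, giving the claimed equality $\alpha\beta^{-1}$. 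Commutativity of the local implementing unitaries for distinct $\beta s_{\vec{x}}\beta^{-1}$ reduces to the commutativity of the $\mathbf{S}_{\vec{x}}$ themselves, which is immediate since they act on disjoint pairs of modes.

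The main obstacle is the rigorous handling of the infinite products. The substantive content is showing that the pointwise definition on local operators does extend to a genuine automorphism of the quasi-local C*-algebra — i.e.\ that it is norm preserving, multiplicative, and $*$-preserving on a dense subalgebra and hence on its completion. Causality is doing real work here: without the localization guarantee for $\beta s_{\vec{x}}\beta^{-1}$ provided by the lemma, the second infinite product would not even be defined on a single local element, since arbitrarily distant factors could in principle reach into any finite region.
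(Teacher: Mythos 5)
Your proposal is correct and follows essentially the same route as the paper: identify the two products with $s\beta s\beta^{-1}$, use $s\beta s=\alpha$, and argue locality of each factor via causality (your extra care about making sense of the infinite products on the quasi-local algebra is a welcome addition the paper leaves implicit). One small correction: since $\beta$ is an automorphism, $\beta s_{\vec{x}}\beta^{-1}(A)=\beta(\mathbf{S}_{\vec{x}})\,A\,\beta(\mathbf{S}_{\vec{x}})$, so the implementing unitary involves $b'_{\vec{x}}=\beta(b_{\vec{x}})$ (and $a_{\vec{x}}$ unchanged), not the images under $\beta^{-1}$; its localization then follows directly from causality of $\beta$, so the inverse-causality lemma is not actually needed at this step in the automorphism formulation.
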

\begin{proof}
First,
\begin{equation}
 \prod_{\vec{x}}(s_{\vec{x}})\prod_{\vec{y}}[\beta s_{\vec{y}}\beta^{-1}]=s\beta s\beta^{-1},
\end{equation}
where
\begin{equation}
s=\prod_{\vec{x}}s_{\vec{x}}
\end{equation}
is the global swap.  This implies $s\beta s=\alpha$.  It follows that
\begin{equation}
 \prod_{\vec{x}}(s_{\vec{x}})\prod_{\vec{y}}[\beta s_{\vec{y}}\beta^{-1}]=\alpha\beta^{-1}.
\end{equation}
Furthermore, for fermions, applying $\beta s_{\vec{y}}\beta^{-1}$ is equivalent to conjugation by a local fermionic unitary because $s_{\vec{x}}(A)=\mathbf{S}_{\vec{x}}A\mathbf{S}_{\vec{x}}$, where
\begin{equation}
 \mathbf{S}_{\vec{x}}=\exp[i\frac{\pi}{2}(b_{\vec{x}}^{\dagger}-a_{\vec{x}}^{\dagger})(b_{\vec{x}}-a_{\vec{x}})],
\end{equation}
and so $\beta s_{\vec{x}}\beta^{-1}$ is equivalent to conjugation by
\begin{equation}
 \exp[i\frac{\pi}{2}(b_{\vec{x}}^{\prime\dagger}-a_{\vec{x}}^{\dagger})(b^{\prime}_{\vec{x}}-a_{\vec{x}})],
\end{equation}
where $b^{\prime}_{\vec{x}}=\beta(b_{\vec{x}})$, which must be localized within the neighbourhood of $\vec{x}$ because $\beta$ is causal.
\end{proof}

\subsection{Representation by Qubits}
\label{sec:Representation by Qubits 2}
We have already seen that $\mathbf{S}_{\vec{x}}$ is local in the qubit representation with a sensible choice of ordering for the Jordan-Wigner Transformation.  To guarantee that the other unitaries are local in the qubit picture, however, we may need to apply the trick of adding Majorana fermions from appendix \ref{sec:Representing Local Fermionic Hamiltonians by Local Qubit Hamiltonians}.  We saw that $\beta s_{\vec{x}}\beta^{-1}$ is equivalent to conjugation by
\begin{equation}
\label{eq:17}
\exp[i\frac{\pi}{2}(b^{\prime\dagger}_{\vec{x}}-a^{\dagger}_{\vec{x}})(b^{\prime}_{\vec{x}}-a_{\vec{x}})].
\end{equation}
This has the form $e^{-iH_{\vec{x}}}$, where $H_{\vec{x}}$ is a self adjoint operator localized on the neighbourhood of $\vec{x}$, so we can apply the trick of adding Majorana fermions to get local unitaries in the qubit picture.  This is always possible because $b^{\prime\dagger}_{\vec{x}}=\beta(b^{\dagger}_{\vec{x}})$ must be a finite linear combination of odd powers of creation and annihilation operators from the neighbourhood of $\vec{x}$.

Note that the number of additional fermions we need to add per site is finite for causal evolutions.  

When we considered a finite number of fermion modes, we needed the state to be a $+1$ eigenstate of all the  operators $M_{\vec{x}\vec{y}}=im_{(\vec{x},\vec{y})} m_{(\vec{x},\vec{y})}$ we introduced.  Now, in the infinite case, we need to construct an analogous state.  Take any state of the original fermions and extend it to a state on the total system of physical and additional Majorana fermions $\rho$, with the property that all additional fermionic modes are empty.  Then define the state $\sigma$ by a family of states $\sigma_{\Lambda}$ on finite regions $\Lambda$:
\begin{equation}
\begin{split}
 \sigma_{\Lambda}(A) & =\rho(K^{\dagger}AK)\\
\textrm{with}\ K & =\prod_{\substack{\vec{x} \in\Lambda\\ \vec{y} \in\mathbb{Z}^d}}\frac{1}{\sqrt{2}}(m_{(\vec{x},\vec{y})}-im_{(\vec{y},\vec{x})}),
\end{split}
\end{equation}
where $A$ is localized on $\Lambda$.  Note that there are only finitely many $m_{(\vec{x},\vec{y})}$ for each $\vec{x}$ because the neighbourhood of $\vec{x}$ is finite.  Also, the ordering of the terms in the product is not critical.   The state $\sigma$ has the property that $\sigma(M_{(\vec{x},\vec{y})}A)=\sigma(AM_{(\vec{x},\vec{y})})=\sigma(A)$, (this can be seen from equation \ref{eq:70} at the start of the following section) so that the local fermionic unitaries augmented with the Majorana fermions are equivalent to the original unitaries on this state.  Furthermore, the results of a measurement on the physical fermions in the state $\sigma$ are the same as those from the same measurement on the original state.

To view this as a QCA, we map the fermionic system to a qubit lattice via the Jordan-Wigner Transformation. Given any sensible ordering of the infinite lattice sites (for example, starting at the origin and spiralling outwards filling progressively larger cubes), this is an isomorphism.  It will map every element of the fermionic C*-algebra to an element of the qubit C*-algebra, as the string of $Z$s that arise from the Jordan-Wigner Transformation for any $a_{\vec{x}}$ or $m_{(\vec{x},\vec{y})}$ will be finite.

So, by adding additional Majorana fermions, we have local qubit unitaries $V_{\vec{x}}$ implementing the local fermionic unitaries of equation \ref{eq:17} when in the state $\sigma$.  Note, however, that it is not necessarily true that $V_{\vec{x}}$ and $V_{\vec{y}}$ commute when the neighbourhood of $\vec{x}$ and the neighbourhood of $\vec{y}$ overlap.  But this does not matter.  We can implement them in a finite number of steps, where each step involves applying $V_{\vec{x}}$ unitaries that are localised on areas that do not overlap.  Note that the order in which they are applied does not matter because these operators commute when acting on the state $\sigma$.  So we have extended theorem \ref{th:2} to the infinite case.

\section{Preparing the Majorana state}
\label{sec:Preparing the Majorana state}
We want to prepare a state in the qubit picture that is a $+1$ eigenstate of $M_{(\vec{x},\vec{y})}=im_{(\vec{x},\vec{y})}m_{(\vec{y},\vec{x})}$.  To do this, first notice that, because $m_{(\vec{x},\vec{y})}^2=m_{(\vec{y},\vec{x})}^2=1$,
\begin{equation}
\label{eq:70}
M_{(\vec{x},\vec{y})}(m_{(\vec{x},\vec{y})}-im_{(\vec{y},\vec{x})})=(m_{(\vec{x},\vec{y})}-im_{(\vec{y},\vec{x})}).
\end{equation}
It follows that
\begin{equation}
 \prod_{\langle \vec{x}\vec{y}\rangle}\frac{1}{\sqrt{2}}(m_{(\vec{x},\vec{y})}-im_{(\vec{y},\vec{x})})\ket{\Omega}
\end{equation}
is a $+1$ eigenstate of all Majorana pairs $M_{(\vec{x},\vec{y})}$, where $\langle \vec{x}\vec{y}\rangle$ denotes pairs of sites where we add Majorana fermions.  The order of the product is irrelevant since any order will be a $+1$ eigenstate of the $M_{(\vec{x},\vec{y})}$ pairs.  Note that
\begin{equation}
\begin{split}
&\prod_{\langle\vec{x}\vec{y}\rangle}\frac{1}{\sqrt{2}}(m_{(\vec{x},\vec{y})}-im_{(\vec{y},\vec{x})})\ket{\Omega}\\
&=\prod_{\langle \vec{x}\vec{y}\rangle}\frac{1}{\sqrt{2}}(c^{\dagger}_{(\vec{x},\vec{y})}-ic^{\dagger}_{(\vec{y},\vec{x})})\ket{\Omega},
\end{split}
\end{equation}
where, $m_{(\vec{x},\vec{y})}=c^{\dagger}_{(\vec{x},\vec{y})}+c_{(\vec{x},\vec{y})}$, with $c_{(\vec{x},\vec{y})}\ket{\Omega}=0$.  This state can be created up to a phase by applying the unitaries
\begin{equation}
\label{eq:16}
\begin{split}
 &e^{i\frac{\pi}{2}\mathcal{B}_{(\vec{x},\vec{y})}}=\\
&\exp[i\frac{\pi}{2}(\frac{1}{\sqrt{2}}[c^{\dagger}_{(\vec{x},\vec{y})}-ic^{\dagger}_{(\vec{y},\vec{x})}]+\frac{1}{\sqrt{2}}[c_{(\vec{x},\vec{y})}+ic_{(\vec{y},\vec{x})}])]
\end{split}
\end{equation}
to $\ket{\Omega}$.  To see this, note that $\mathcal{B}_{(\vec{x},\vec{y})}^2=1$, which implies that
\begin{equation}
 e^{i\theta \mathcal{B}_{(\vec{x},\vec{y})}}=\cos(\theta)+i\sin(\theta)\mathcal{B}_{(\vec{x},\vec{y})}.
\end{equation}

We want to create the invariant state on qubits, but in the qubit representation the creation operators $c^{\dagger}_{(\vec{x},\vec{y})}$ still have those awkward strings of $Z$ operators, making this a non local unitary.  Still, the unitaries in equation \ref{eq:16} can be implemented efficiently by using a method presented in \cite{AL97} to deal with the strings of $Z$ operators.  First, consider all the qubits that the qubit representation of $c^{\dagger}_{(\vec{x},\vec{y})}$ acts on with a $Z$.  We can map the parity of these qubits to a flag qubit, which means that a single $Z$ acting on the flag qubit has the same effect as the string of $Z$s applied to the other qubits.  For example, with $r_i\in\{0,1\}$,
\begin{equation}
\begin{split}
 Z_0...Z_n\ket{r_0...r_n} & \ket{\sum_{j=0}^n r_j\bmod 2}=\\
 (-1)^{\sum_{j=0}^n r_j\bmod 2}\ket{r_0...r_n} & \ket{\sum_{j=0}^n r_j\bmod 2}=\\
 \ket{r_0...r_n}Z & \ket{\sum_{j=0}^n r_j\bmod 2}.
\end{split}
\end{equation}
So, after preparing flag qubits for $c_{(\vec{x},\vec{y})}^{\dagger}$ and $c^{\dagger}_{(\vec{y},\vec{x})}$, the qubit unitary in (\ref{eq:16}) is equivalent to a unitary on four qubits.  After this step, we need to reverse the operation preparing the flag qubits, but this and the original flag preparation can be done using only $n$ two qubit unitaries, where $n$ is the number of qubits we count the parity of.  For example, if $n=2$, we need only two steps:
\begin{equation}
\begin{split}
 & \ket{r_1r_2}\ket{0}_{f}\rightarrow \ket{r_1r_2}\ket{r_1}_{f} \rightarrow\\
 & \ket{r_1r_2}\ket{(r_1+r_2)\bmod 2}_{f},
\end{split}
\end{equation}
where $r_i\in\{0,1\}$ and the subscript $f$ denotes the flag qubit.  There is a constant number of qubits associated to each site because the number of Majorana fermion pairs we introduce per site does not grow with $\mathcal{N}$, so counting the parity takes $O(\mathcal{N})$ two qubit operations.  Therefore, preparing the $+1$ eigenstate of all Majorana terms can be done in time $O(\mathcal{N}^2)$.

Note that we could also use this method to apply the $U_B\mathbf{S}_{\vec{x}\mu}U_B^{\dagger}$ gates without introducing additional Majorana fermions, but this would mean an overhead of $O(\mathcal{N})$ for each gate.

Similarly, to create initial physical fermions states (essentially, by applying $a_{\vec{x}}^{\dagger}$), we can  use the same method to deal with strings of $Z$s appearing in the qubit picture, which will add an overhead of $O(\mathcal{N})$ steps.
\end{document}